\newtheorem{ass}{Assumption}
\newtheorem{pro}{Proposition}
\newtheorem{lem}{Lemma}
\newtheorem{rem}{Remark}
\newtheorem{thm}{Theorem}
\newtheorem{definition}{Definition}
\newenvironment{proof}{{\it Proof}.\ }{\hfill $\blacksquare$\par}
\newcommand{\refappendix}[1]{\hyperref[#1]{Appendix~\ref*{#1}}}
\begin{document}

\title{Adaptive Federated Learning via New Entropy Approach}

\author{Shensheng~Zheng$^{\#}$\orcidlink{0000-0002-2951-3737}, Wenhao~Yuan$^{\#}$\orcidlink{0009-0001-6625-7496},~\IEEEmembership{Student Member, IEEE}, Xuehe~Wang\orcidlink{0000-0002-6910-468X},~\IEEEmembership{Member, IEEE}, Lingjie~Duan\orcidlink{0000-0002-0217-6507},~\IEEEmembership{Senior Member, IEEE}

\IEEEcompsocitemizethanks{\IEEEcompsocthanksitem Shensheng Zheng, Wenhao Yuan, and Xuehe Wang are with the School of Artificial Intelligence, Sun Yat-sen University, Zhuhai 519082, China. Email: \href{mailto:zhengshsh7@mail2.sysu.edu.cn}{zhengshsh7@mail2.sysu.edu.cn}, \href{mailto:yuanwh7@mail2.sysu.edu.cn}{yuanwh7@mail2.sysu.edu.cn}, \href{mailto:wangxuehe@mail.sysu.edu.cn}{wangxuehe@mail.sysu.edu.cn}.
\IEEEcompsocthanksitem Lingjie Duan is with the Pillar of Engineering Systems and Design, Singapore University of Technology and Design, 467372, Singapore. Email:  \href{mailto:lingjie_duan@sutd.edu.sg}{lingjie\_duan@sutd.edu.sg}.
}
\thanks{$\#$ Shensheng Zheng and Wenhao Yuan have contributed to this work equally.\\
This work was supported by the National Natural Science Foundation of China (NSFC) under Grant No. 62206320. (Corresponding author: Xuehe Wang.)}}

\IEEEtitleabstractindextext{%
\begin{abstract}
Federated Learning (FL) has emerged as a prominent distributed machine learning framework that enables geographically discrete clients to train a global model collaboratively while preserving their privacy-sensitive data. However, due to the non-independent-and-identically-distributed (Non-IID) data generated by heterogeneous clients, the performances of the conventional federated optimization schemes such as FedAvg and its variants deteriorate, requiring the design to adaptively adjust specific model parameters to alleviate the negative influence of heterogeneity. In this paper, by leveraging entropy as a new metric for assessing the degree of system disorder, we propose an adaptive FEDerated learning algorithm based on ENTropy theory (FedEnt) to alleviate the parameter deviation among heterogeneous clients and achieve fast convergence. Nevertheless, given the data disparity and parameter deviation of heterogeneous clients, determining the optimal dynamic learning rate for each client becomes a challenging task as there is no communication among participating clients during the local training epochs. To enable a decentralized learning rate for each participating client, we first introduce the mean-field terms to estimate the components associated with other clients' local parameters. Furthermore, we provide rigorous theoretical analysis on the existence and determination of the mean-field estimators. Based on the mean-field estimators, the closed-form adaptive learning rate for each client is derived by constructing the Hamilton equation. Moreover, the convergence rate of our proposed FedEnt is proved. The extensive experimental results on the real-world datasets (i.e., MNIST, EMNIST-L, CIFAR10, and CIFAR100) show that our FedEnt algorithm surpasses FedAvg and its variants (i.e., FedAdam, FedProx, and FedDyn) under Non-IID settings and achieves a faster convergence rate.  
\end{abstract}

\begin{IEEEkeywords}
Adaptive Federated Optimization, Entropy Theory, Mean-Field Analysis.
\end{IEEEkeywords}}

\maketitle

\IEEEdisplaynontitleabstractindextext

\IEEEpeerreviewmaketitle

\IEEEraisesectionheading{\section{Introduction}\label{introduction}}
\IEEEPARstart{T}he rapid growth of the Internet of Things (IoT) and Artificial Intelligence (AI) has facilitated the utilization of large volumes of discrete client data to train efficient machine learning (ML) models, particularly for applications such as recommendation systems \cite{ghosh2020efficient}, healthcare informatics \cite{xu2021federated}, edge computing \cite{xia2021survey}, etc. However, traditional ML approaches require the collection of massive data, which raises concerns regarding personal data privacy \cite{bonawitz2017practical}. To address such issue while economizing computational and communication resources while safeguarding users' data privacy, federated learning has emerged as a viable solution, which enables resource-constrained discrete clients to jointly train a global model under the orchestration of a central server while keeping privacy-sensitive data localized \cite{yang2019federated}.

The widely adopted FL algorithms, such as Federated Averaging (FedAvg) and its variants \cite{r1}, typically consist of a central server and a group of distributed clients. In each global training iteration, the discrete clients first receive the global model parameter from the central server, followed by utilizing their local data to derive the respective local model parameters. Subsequently, the central server aggregates the local model parameters contributed by each participating client and updates the global model. The training process proceeds until the training loss of the global model falls below a predefined threshold \cite{r1}. This framework efficiently leverages the computational consumption and storage resources of each client, all the while maintaining the security of privacy-sensitive data \cite{tran2019federated}.

Although FedAvg performs well on Gboard application \cite{hard2018federated}, recent literature \cite{zhao2018federated, hsu2019measuring, karimireddy2020scaffold} has highlighted convergence issues in specific scenarios. \cite{sattler2019robust}  observes that FedAvg exhibits a slow convergence rate and deteriorated accuracy when applied to non-independent-and-identically-distributed (Non-IID) datasets, which is caused by client parameter drifting and lack of adaptivity to the different models \cite{reddi2020adaptive}. Besides,  another unavoidable issue in FL pertains to the substantial communication overhead incurred by the dissemination of model updates. Adhering to the previously delineated protocol necessitates that each participating client transmit a comprehensive model update at every iteration of training, which will introduce a huge communication burden \cite{sattler2019robust}. Although clients can increase the number of local training epochs, the inability of participating clients to directly communicate during the local update process may result in model parameter deviation, thereby reducing the convergence speed and accuracy of the model \cite{reddi2020adaptive}.


Substantial federated optimization methods have been proposed to address the convergence challenges associated with Non-IID datasets, i.e.,  partial federated optimization, asynchronous federated optimization, and adaptive federated optimization. For partial federated optimization, some works adopt a client selection mechanism to choose geographically discrete clients with high-quality local information, such as computation and communication resources, gradient information, and local dataset \cite{fu2023client}. Asynchronous federated optimization performs global aggregation as soon as the central server collects a single local update from any arbitrary mobile client node, while this practice may potentially result in significant deterioration on the global model due to the influence of a single client's local model, thereby leading to suboptimal convergence performance \cite{zhang2023timelyfl}. Adaptive federated optimization focuses on adaptively adjusting the model hyperparameter (e.g., aggregation frequency \cite{liu2021fedpa, liu2020client, sun2020adaptive}, aggregation weight \cite{wu2021fast}, the batchsize of local training \cite{ren2020accelerating, zhang2021adaptive}, and the allocated bandwidth \cite{xu2021adaptive}) to achieve better global model performance.

However, there are three issues in the above FL optimization methods. Firstly, most methods \cite{wang2019adaptive, li2020federated, nguyen2020fast, chen2019communication, wu2021fast, nishio2019client} neglect the optimization on learning rate and tend to assign a static learning rate during model training, which is suboptimal as variations in mobile devices and clients' datasets can lead to disparities in local parameters and a slow convergence rate. Secondly, the convergence analysis of adaptive FL algorithms in \cite{reddi2020adaptive} relies on multiple hyperparameters and fails to provide precise adaptation for each training iteration. Last, due to the lack of communication among clients during local training, how to design the decentralized adaptive algorithm that accelerates the convergence rate of the global model is also a problem. In this paper, by utilizing a new entropy term to measure the diversity among all clients' local model parameters, we adopt an adaptive learning rate scheme for each client to achieve fast convergence. Our key novelty and main contributions are summarized as follows.

\begin{itemize} [itemsep=0pt, leftmargin=*, align=right]
\item\emph{New Entropy-based adaptive federated learning:} 
To our best knowledge, this is the first paper to utilize entropy theory to optimize the decentralized adaptive learning rate for each client under the Non-IID data distribution. To alleviate the negative influence of heterogeneity, an entropy term that measures the diversity among the local model parameters of all clients is taken into consideration for each client's model updates. By employing a mean-field scheme to continuously estimate other clients’ local information in the entropy term, we design a decentralized adaptive learning rate for each client to achieve fast convergence, without requiring the clients to exchange their local parameters with each other.

\item\emph{Decentralized learning rate design via mean-field analysis:}  
Due to the data disparity of heterogeneous clients, the learning rate of individual clients should be dynamically adjusted based on other clients' local information to avoid parameter deviation and achieve faster convergence. Given that there is no communication during local training epochs, a mean-field scheme is introduced to estimate the terms relating to other clients' local parameters over time, and thus the decentralized adaptive learning rate for each client is obtained with closed-form by constructing the Hamilton equation. Moreover, the existence of the mean-field estimators is demonstrated, and an iterative algorithm is further proposed for their determination. 

\item\emph{Convergence analysis for the proposed FedEnt:} 
We theoretically analyze the global model convergence property and provide the convergence upper bound and convergence rate for our proposed adaptive FL algorithm FedEnt. Extensive experimental results on real-world datasets under Non-IID distribution demonstrate that FedEnt achieves better model performance and effectively alleviates the deviation among heterogeneous clients compared with state-of-the-art FL algorithms (i.e., FedAvg, FedAdam, FedProx, and FedDyn).
\end{itemize}

The rest of this paper is organized as follows. We review the related work in Section \ref{related_wrok}. The system model and problem formulation are presented in Section \ref{sec_systemmodel}. In Section \ref{subsec_adlr}, we design the decentralized adaptive learning rate via the entropy approach. Then, the convergence analysis is provided in Section \ref{sec_conv_an}. Experimental results are shown in Section \ref{sec_experiments}. Finally, we conclude this paper in Section \ref{sec_conclusion}.

\section{Related Work} \label{related_wrok}
In this section, we briefly review the related literature on partial optimization, asynchronous optimization, and adaptive optimization in federated learning. 
 
\subsection{Partial Optimization in FL}
Cho \emph{et al.} \cite{cho2022towards} propose an FL framework with biased client selection that is cognizant of the training progress of each participating client. PyramidFL \cite{li2022pyramidfl} focuses on the difference between selected and unselected participants, and fully exploits the system and data heterogeneity within selected clients to efficiently profile their utility. Nguyen \emph{et al.} \cite{nguyen2020fast} predominantly rely on gradient information from clients and exclude ones with negative inner products between their gradient and the global gradient vector from the FL training. Huang \emph{et al.} \cite{huang2023maverick} propose a strategy for client selection based on the Wasserstein distance between local and global data distributions and establish a convergence bound for it. Nishio \emph{et al.} \cite{nishio2019client} present the FedCS algorithm, which selects nodes based on the resource conditions of local clients rather than employing random selection. Liu \emph{et al.} \cite{liufeddwa} propose FedDWA, aiming to minimize the gap between personalized models and reference models to customize aggregation weights for each client. In addition, the literatures \cite{zhang2021optimizing, yang2023detfed, zhang2021deep} focus on the application of FL in Industrial IoT, constructs new objective optimization functions based on efficient resource utilization, and utilizes deep reinforcement learning for optimal resource scheduling decisions.

\begin{figure*}[t]
\centerline{\includegraphics[width=1.0\textwidth, trim=10 10 10 10,clip]{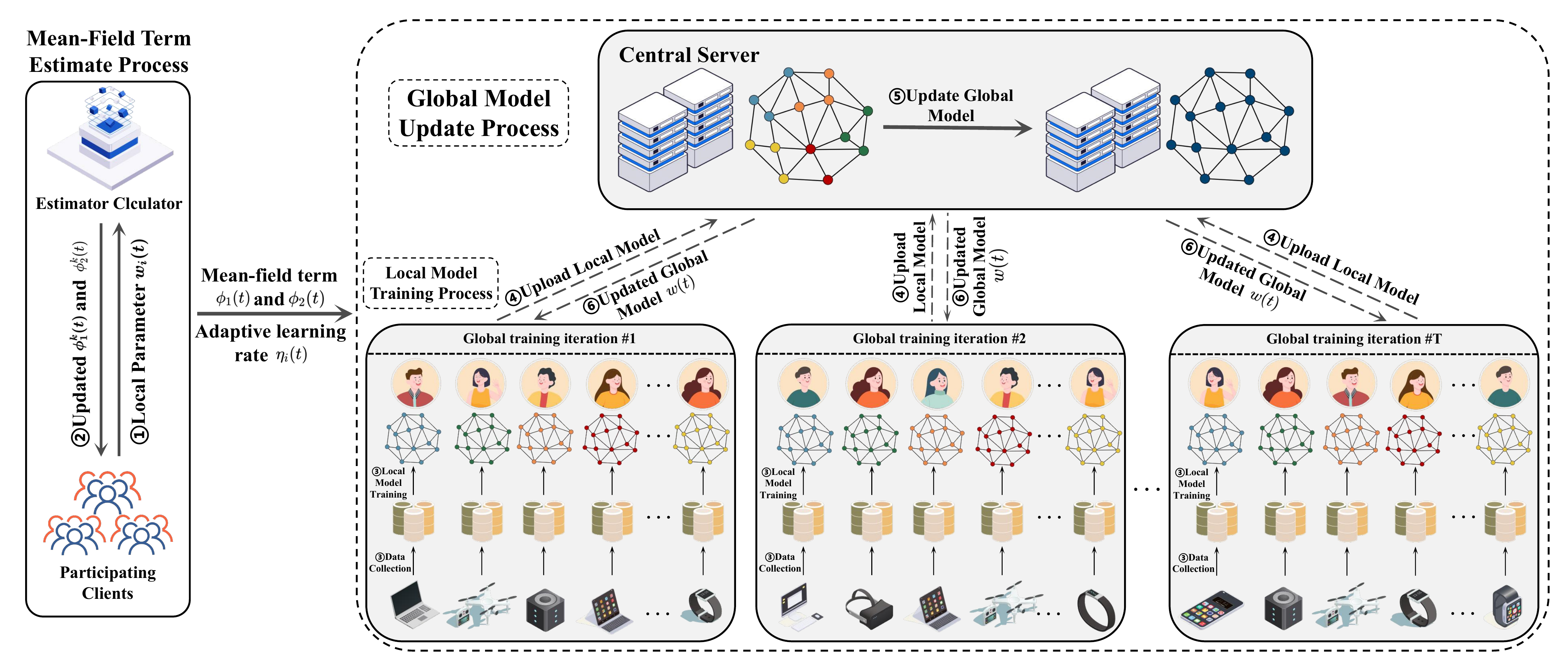}}
\caption{The Federated Learning Framework with Entropy Approach. }
\label{framework}
\vspace{-15pt}
\end{figure*}

\subsection{Asynchronous Optimization in FL}
Xie \emph{et al.} \cite{xie2019asynchronous} employ a weighted averaging approach for the asynchronous updating of the global model, with the mixing weight being dynamically determined as a function of staleness. Ma \emph{et al.} \cite{ma2021fedsa} introduce a semi-asynchronous federated learning mechanism known as FedSA, in which the parameter server aggregates a specified number of local models based on their arrival order during each global iteration. Xu \emph{et al.} \cite{xu2023fedlc} present a novel asynchronous FL mechanism named FedLC, allowing each heterogeneous device to transmit gradients to other devices with varying data distributions for facilitating local collaboration and enhancing model generality. Zhang \emph{et al.} propose TimelyFL \cite{zhang2023timelyfl}, a heterogeneity-aware asynchronous federated learning framework that incorporates adaptive partial training. Nguyen \emph{et al.} \cite{nguyen2022federated} introduce FedBuff, an innovative asynchronous federated optimization framework featuring buffered asynchronous aggregation to achieve scalability and privacy against the honest-but-curious threat model, incorporating secure aggregation and differential privacy.

\begin{table}[t]
\setlength{\abovecaptionskip}{0cm} 
\caption{Key Notations and Corresponding Meanings}
\begin{center}
\renewcommand\arraystretch{1.2}
\begin{tabular}{p{0.5in}p{2.65in}}
\toprule[0.8pt]
\makecell[l]{\textbf{Symbol}}&\makecell[c]{\textbf{Description}}\\
\midrule[0.8pt]
$N$ & number of participating clients   \\

$T$ & total number of global training iteration  \\

$\mathcal{D}_{i}$ & private data on client $i$   \\

$D_{i}$ & the datasize of $\mathcal{D}_{i}$   \\

$\boldsymbol{w_{i}}(t)$ & local model parameter of client $i$ at global iteration $t$    \\

$\boldsymbol{w}(t)$ & global model parameter at global iteration $t$   \\

$F_{i}(\boldsymbol{w}(t))$ & local loss function of client $i$  \\

$F(\boldsymbol{w})$ & global loss function  \\

$\nabla F$ & gradient of loss function  \\

$\eta_{i}(t)$ & learning rate of client $i$ at the $t$-th global iteration  \\ 

$\boldsymbol{\phi_{1}}(t)$ & mean-field terms for estimating the global weight \\

$\phi_{2}(t)$ & mean-field terms for estimating the inner product of other clients' local parameter \\

\bottomrule[0.8pt]
\end{tabular}
\label{tab1}
\end{center}
\vspace{-15pt}
\end{table}

\subsection{Adaptive Optimization in FL}
Wang \emph{et al.} \cite{wang2019adaptive} adjust the frequency of local updates in real-time dynamically between consecutive global aggregations, with the objective of minimizing learning loss in the computational system when the total resource allocation is fixed. Reddi \emph{et al.} \cite{reddi2020adaptive} employ a pseudo-gradient difference method to adaptively update gradients. Wu \emph{et al.} \cite{wu2021fast} adjust client weights by quantifying their contributions through the angle between the local and the global gradient vector. Chen \emph{et al.} \cite{chen2019communication} distribute different update weights based on the number of local communication epochs. Ma \emph{et al.} \cite{ma2021fedsa} compute the dynamic learning rate based on aggregation frequency while neglecting to account for variations among different local parameters. Md \emph{et al.} \cite{uddin2023arfl} propose an adaptive model update strategy that optimally allocates local epochs and dynamically adjusts the learning rate while regularizing the conventional objective function by adding a proximal term. He \emph{et al.} \cite{he2023clustered} propose an adaptive layer-wise weight perturbation technique by introducing fine-grained noise to enhance model accuracy.

Note that most aforementioned literature on adaptive federated learning overlooks the optimization of the learning rate to alleviate the negative influence of client heterogeneity, as well as the associated convergence property analyses. In contrast to the previous work, we derive the closed-form solution to the adaptive learning rate for each client in each global iteration by utilizing entropy and mean-field schemes. 

\section{System Model and Problem Formulation}\label{sec_systemmodel}

In this section, we first introduce the preliminaries of the FL framework in Section \ref{subsec_FL_model}, and then formulate our adaptive FL optimization problem by considering the deviation among the local parameters at each global iteration in Section \ref{problem_formulation}. Our proposed federated learning framework with an entropy approach is shown in Fig. \ref{framework}.

\subsection{Standard Federated Learning Model}\label{subsec_FL_model}

Federated learning is a distributed ML paradigm in which a large number of mobile clients coordinate with the central server to train a shared model without sharing their private data \cite{r1}. In each global iteration, the participating client utilizes their local data to perform several epochs of mini-batch stochastic gradient descent (SGD) to update the local model, and the central server aggregates the local parameters of all clients to update the global model for the next iteration.

Given $N$ clients participating in the FL model training process, we assume that each client $i \! \in \! S \! = \! \{1,2,\ldots, N\}$ leverages its local private dataset $\mathcal{D}_i$ with data size $D_i$. We denote the collection of data samples in $\mathcal{D}_i$ by $\{\boldsymbol{x_i^{j}}, y_i^{j}\}_{j=1}^{D_i}$, where $\boldsymbol{x_i^{j}} \in \mathbb{R}^{d}$ and $y_i^{j} \in \mathbb{R}$ refer to the input sample vector and the labeled output value, respectively. The loss function $f_i^{j}(\boldsymbol{w})$ is used to measure the divergence between the real output $y_i^{j}$ and predicted output ${\hat{y}}_i^{j}$. Thus, the loss function $F_{i}(\boldsymbol{w})$ of client $i$ on the dataset $\mathcal{D}_i$ is defined as follows:
\begin{align}\label{client_own_loss}
\textstyle F_i(\boldsymbol{w}) = \frac{1}{D_i}\sum_{j \in {\mathcal{D}}_i} f_i^{j}(\boldsymbol{w}).
\end{align}

Without loss of generality, the total training iteration conducted by the federated learning process is denoted by $T$ and we adopt a general principle of SGD optimization to update the local model \cite{wu2021fast}. At iteration $t + 1 \in \{1, 2, \ldots, T\}$, each client uses its own dataset and the learning rate to update its local parameters based on the last global parameter $\boldsymbol{w}(t)$ received from the central server:
\begin{align}\label{w_static_update}
\boldsymbol{w_i}(t+1) = \boldsymbol{w}(t) - \eta_i\nabla F_i(\boldsymbol{w}(t)),
\end{align}
where $\eta_i$ is the learning rate of client $i$ and $\nabla F_i(\boldsymbol{w}(t))$ is $i$-th client's gradient at global iteration $t$. The central server aggregates the local parameters uploaded by $N$ participating clients to update the new global parameter:
\begin{align}\label{w_global_static_update}
\boldsymbol{w}(t+1)= \textstyle \sum_{i=1}^{N}\theta_{i} \boldsymbol{w_i}(t+1),
\end{align}
where $\theta_{i} \! = \! \frac{D_i}{\sum_{j=1}^{N}D_j}$ represents the aggregation weight of client $i$. Then, the central server dispatches the updated global parameter $\boldsymbol{w}(t+1)$ to all participating clients for the fresh iteration of global training. Finally, the objective of the FL model is to find the optimal global parameter $\boldsymbol{w}$ to minimize the global loss function $F(\boldsymbol{w})$, i.e.,
\begin{align}
\min_{\boldsymbol{w}} F(\boldsymbol{w}) = \textstyle \sum_{i=1}^{N} \theta_{i} F_{i}(\boldsymbol{w})
\end{align}

For the subsequent theoretical analysis, we adopt the following assumptions and lemma in our demonstration, which are widely used in federated optimization problems \cite{r1, nguyen2020fast, ward2020adagrad}.
\begin{ass}\label{Bounded_Gradients}
(Bounded Gradient) The local gradient function $\nabla F_{i}(\boldsymbol{w})$ have $D$-bound gradients, i.e., for any $i \! \in \! \{1, \\ 2, \ldots, N\}$, we have $\|\nabla F_{i}(\boldsymbol{w}) \| \leq D$.
\end{ass}
\begin{ass}\label{Lipschitz_Smoothness}
($L$-Lipschitz Smoothness)  The local loss function $F_{i}(\boldsymbol{w})$ is $L$-Lipschitz smoothness for each participating clients $i \in \{1,2, \ldots, N\}$, i.e, 
\begin{align}
\| \nabla F_{i}(\boldsymbol{w}) - \nabla F_{i}(\boldsymbol{w'})  \| \leq L \| \boldsymbol{w} - \boldsymbol{w'} \|,
\end{align}
where $\boldsymbol{w}$ and $\boldsymbol{w'}$ are any two model parameters.
\end{ass}

\begin{lem} \label{PL_inequality}
[Polyak-\L{}ojasiewicz inequality] Denote the optimal global parameter as $\boldsymbol{w}^{*}$, the $L$-Lipschitz continuous global loss function $F(\boldsymbol{w})$ satisfies Polyak-\L{}ojasiewicz condition, i.e., the following holds for any $\delta > 0$ and $\boldsymbol{w} \in \mathbb{R}^{d}$: 
\begin{align}
\|\nabla{F(\boldsymbol{w})}\|^{2} \geq 2 \delta(F(\boldsymbol{w})-F(\boldsymbol{w}^{*})). 
\end{align}
\end{lem}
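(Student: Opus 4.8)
The plan is to obtain the PL inequality from a lower-curvature hypothesis on the global objective, since $L$-smoothness alone does not imply it. The standard companion assumption in the federated setting is that $F$ is $\delta$-strongly convex, which I would adopt here; it gives, for all $\boldsymbol{w}, \boldsymbol{w}' \in \mathbb{R}^d$,
\begin{align}
F(\boldsymbol{w}') \ge F(\boldsymbol{w}) + \langle \nabla F(\boldsymbol{w}), \boldsymbol{w}' - \boldsymbol{w}\rangle + \tfrac{\delta}{2}\|\boldsymbol{w}' - \boldsymbol{w}\|^2 .
\end{align}

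The core step is to minimize both sides of this inequality over $\boldsymbol{w}'$. On the left, the definition of the optimal global parameter $\boldsymbol{w}^{*}$ makes the minimum equal to $F(\boldsymbol{w}^{*})$; on the right, the quadratic in $\boldsymbol{w}'$ is minimized at $\boldsymbol{w}' = \boldsymbol{w} - \tfrac{1}{\delta}\nabla F(\boldsymbol{w})$, and plugging this back leaves $F(\boldsymbol{w}) - \tfrac{1}{2\delta}\|\nabla F(\boldsymbol{w})\|^2$. Combining the two bounds yields $F(\boldsymbol{w}^{*}) \ge F(\boldsymbol{w}) - \tfrac{1}{2\delta}\|\nabla F(\boldsymbol{w})\|^2$, and a one-line rearrangement produces $\|\nabla F(\boldsymbol{w})\|^2 \ge 2\delta\big(F(\boldsymbol{w}) - F(\boldsymbol{w}^{*})\big)$, which is exactly the claim; note that it then also holds with any smaller positive constant in place of $\delta$, which is consistent with the ``for any $\delta > 0$'' phrasing. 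An alternative route is to treat the PL condition as a primitive structural assumption on $F$ (as is routinely done for interpolating or over-parameterized models, which satisfy PL without convexity), in which case the lemma merely restates a known property and one would cite the relevant references; one could additionally remark that if each $F_i$ satisfies PL with a common constant then so does the convex combination $F = \sum_{i=1}^N \theta_i F_i$.

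The only real obstacle here is conceptual rather than computational: the PL condition is not a consequence of Assumption~\ref{Lipschitz_Smoothness}, since $L$-smoothness bounds curvature only from above and says nothing about how fast $F$ must grow away from its minimizer, so some additional hypothesis (strong convexity, or PL taken as given) is genuinely needed. Once that hypothesis is in place, the remainder is the short ``minimize both sides'' argument sketched above, with no analytic subtlety.
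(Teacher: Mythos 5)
Your derivation is mathematically correct, but note that the paper itself offers no proof of this lemma at all: it is introduced alongside Assumptions~\ref{Bounded_Gradients}--\ref{Lipschitz_Smoothness} as a standing condition on $F$, justified only by citation to standard references in federated optimization, and is then invoked once in the proof of Theorem~\ref{thm_convergence_rate}. So your proposal is not a rederivation of the paper's argument (there is none) but a supplement to it, and your central conceptual point is exactly the right one to make: $L$-smoothness bounds curvature only from above, so the Polyak-\L{}ojasiewicz condition cannot follow from Assumption~\ref{Lipschitz_Smoothness} and must either be derived from an added lower-curvature hypothesis (your $\delta$-strong-convexity route, via minimizing both sides of the strong-convexity inequality over $\boldsymbol{w}'$, is the standard and correct argument) or be taken as a primitive assumption, which is effectively what the paper does. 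Two small refinements: the lemma's phrase ``for any $\delta>0$'' is imprecise --- PL holds for \emph{some} $\delta>0$ (and then for all smaller positive constants), not for arbitrarily large $\delta$, and your parenthetical gesture toward this could be stated as an outright correction; and if you adopt strong convexity of $F$ you should flag that this is a genuinely stronger global hypothesis than anything the paper assumes, whereas your alternative (PL as a structural assumption, closed under the convex combination $F=\sum_i\theta_iF_i$ when the $F_i$ share a common constant) matches the paper's actual usage more faithfully.
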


\begin{rem}
In practice, the value of $L$ can be estimated or selected through experimentation or theoretical analysis. If the expression of the loss function is known, $L$ can be directly calculated through analytical methods. For instance, for known loss functions such as quadratic functions or functions with bounded Hessian matrices, $L$ can be directly calculated. Similar to $L$, the value of $D$ can be estimated through experience or by analyzing the properties of the function. If the gradient of the function can be directly calculated, the maximum value of the gradient can be set as the estimator for $D$. As shown in Table \ref{table_estimate}, we conduct numerical experiments to determine the value of $D$ and $L$ under different datasets and local model setup, which also guarantees the global model convergence.
\end{rem}

\begin{table}[t]
\setlength{\abovecaptionskip}{-4pt} 
\caption{Statistical estimates of $D$ and $L$ on FL system}
\renewcommand\arraystretch{1.0}
\begin{center}
\resizebox{0.47\textwidth}{!}{\begin{tabular}{c|c|c|}
\toprule[1pt]
\parbox{5.0cm}{\centering\textbf{Datasets+Local Model}}&\parbox{1cm}{\centering\textbf{$D$} }&\parbox{1cm}{\centering\textbf{$L$} } \\ \cmidrule[0.5pt](l{1pt}r{0pt}){1-3}

\parbox{4.5cm}{\centering\textbf{MNIST+MNIST-CNN}} & 40 & 39011 \\ \cmidrule[0.5pt](l{1pt}r{0pt}){1-3}

\parbox{4.5cm}{\centering\textbf{EMNIST-L+LeNet-5}} & 6770 & 660446 \\ \cmidrule[0.5pt](l{1pt}r{0pt}){1-3}

\parbox{4.5cm}{\centering\textbf{CIFAR10+CIFAR10-CNN}}& 57 & 96767 \\ \cmidrule[0.5pt](l{1pt}r{0pt}){1-3}

\parbox{4.5cm}{\centering\textbf{CIFAR100+VGG-11}}& 79 & 17329 \\ 
\bottomrule[1pt]
\end{tabular}}
\label{table_estimate}
\end{center}
\vspace{-18pt}
\end{table}


\subsection{Problem Formulation for New Entropy-based adaptive Federated Learning} \label{problem_formulation} 
In the conventional FL algorithms, each participating client utilizes a static learning rate to train the local model. However, there exist several drawbacks to such an update rule. First of all, it fails to adapt to the parameter updates for different clients, leading to the local model deviating from the optimal global model when the data distribution and devices of the clients are heterogeneous \cite{hsu2019measuring}. Moreover, the static learning rate strategy mitigates the convergence speed of the global model in comparison to the dynamic learning rate scheme \cite{agrawal2021genetic}. To deal with the above problems, we introduce the following dynamic learning rate approach to update the local model parameters:
\begin{align}\label{dynamic_w_update}
\boldsymbol{w_i}(t+1) = \boldsymbol{w}(t) - \eta_i(t)\nabla F_i(\boldsymbol{w}(t)),
\end{align}
where $\eta_i(t) \! \in \! (0, 1) $ is the learning rate of client $i$ at the global iteration $t$. Recent literature has increasingly focused on entropy-based criteria within adaptive systems \cite{silva2005neural}, highlighting the significance of entropy in the optimization of learning rates in machine learning research  \cite{silva2005neural, santos2004optimization, zhang2016critical}. \cite{ahmed2019understanding} examines how the optimal learning rate correlates with the local curvature of the loss function, observing that higher entropy accelerates this relationship. By integrating entropy, the approach moderates the curvature's fluctuations along the optimization path, thereby enhancing the efficacy of learning rate adjustments. Despite the advantages of using entropy, existing literature seldom considers adjusting the learning rate and controlling the local model parameters based on entropy. Notably, the convergence speed of the global model is directly influenced by the degree of divergence among the local parameters of participating clients, with lesser divergence facilitating quicker convergence \cite{sattler2020clustered}. In other words, the model will tend to converge when the local parameters of each client are similar. Thus, we introduce the following entropy term to measure the divergence among heterogeneous clients in federated learning.
\begin{definition}\label{def1}
To evaluate the degree of system disorder, we utilize the entropy term $\sum_{i=1}^{N}p_{i}(t) \log p_{i}(t)$ to quantify the differences among the local parameters, where  
\begin{align}\label{p_init}
p_{i}(t) = \frac{\theta_{i} \boldsymbol{w_{i}}(t)^{T}\boldsymbol{w_{i}}(t)}{\sum_{j=1}^{N}  \theta_{j} \boldsymbol{w_{j}}(t)^{T}\boldsymbol{w_{j}}(t)} \in [0, 1].
\end{align}
\end{definition}

Accordingly, to achieve a fast FL model convergence, each participating client $i \!\in\! \{1,2,\ldots, N\}$ aims to design the adaptive learning rate $\eta_{i}(t), t \!\in\! \{0,1,  \ldots,  T\}$ that minimizes the divergence among the local parameters, by considering the entropy term $\sum_{i=1}^{N}p_{i}(t) \log p_{i}(t)$ that quantify the differences among the local parameters. In addition, to avoid significant fluctuation during the process of local model training, the size of the learning rate should also be taken into consideration. Thus, the objective function is formulated as follows:
\begin{align}\label{objective}
U_{i}(T) \!= & \footnotesize \min_{\substack{\eta_{i}(t)}} \sum_{t=0}^{T}\! \left(\! \beta \! \sum_{j=1}^{N}p_{j}(t) \log (p_{j}(t)) \!+\!  (1 \!-\! \beta)\eta_{i}^{2}(t)\!\right), \\ 
& s.t.\quad \boldsymbol{w_i}(t+1) = \boldsymbol{w}(t)-\eta_i(t)\nabla F_i(\boldsymbol{w}(t)), \nonumber
\end{align}
where aggregation weight $\beta \!\in\! (0,1)$ is predetermined to measure the degree of system disorder caused by the local model parameter deviation. A high value of $\beta$ demonstrates a high influence of the local parameter deviation on the FL system. 

However, there are two challenges in the above optimization problem with the objective function $U_{i}(T)$ in Eq. (\ref{objective}). Firstly, it is difficult to solve the optimal dynamic learning rate by considering the huge number of learning rate combinations over time. Moreover, the design of $i$-th client's learning rate $\eta_i(t)$ is not only affected by its own local parameters $\boldsymbol{w_{i}}(t)$ but also affected by other clients' local parameters according to the term $\sum_{j=1}^{N}  \theta_{j} \boldsymbol{w_{j}}(t)^{T}\boldsymbol{w_{j}}(t)$, $j \! \in \! S \backslash \{i\}$, which makes the multi-agent joint learning rate design more challenging. To solve the above problem, we first derive the decentralized dynamic learning rate $\eta_i(t)$ for each client $i$ via mean-field approach, then we prove the existence of the approximate mean-field terms via fixed point theorem and propose an iterative algorithm to finalize the learning rate design.

\section{Analysis of Adaptive Learning Rate}\label{subsec_adlr}

In this section, we derive the optimal adaptive learning rate $\eta_{i}(t)$ for each client $i$ through the mean-field approach. We first solve the adaptive learning rate $\eta_{i}(t)$ by constructing the mean-field terms $\boldsymbol{\phi_1}(t)$ and $\phi_2(t)$ to estimate the global parameter $\boldsymbol{w}(t)$ and  $\sum_{j=1}^{N}  \theta_{i} \boldsymbol{w_{i}}(t)^{T}\boldsymbol{w_{i}}(t)$, respectively. Then, we demonstrate the existence of mean-field terms $\boldsymbol{\phi_1}(t)$ and $\phi_2(t)$. Finally, we propose an algorithm to iteratively calculate the adaptive learning rate. 

\subsection{Framework Overview of FedEnt}
The training framework of our proposed FedEnt is shown in Fig.~\ref{framework} and the following overview demonstrates the detailed content of each step:
\begin{itemize}[itemsep=0pt, leftmargin=*, align=right]
\item \textbf{Step 1 (Estimators Initialization):} At the commencement of the fixed point calculation for mean-field terms, each client updates an arbitrary initial local parameter $\boldsymbol{w_i(0)}$ to the estimator calculator. Besides, the estimator calculator will define the fixed point termination thresholds $\varepsilon_{1}$ and $\varepsilon_{2}$ for the following iterative calculation process. 

\item \textbf{Step 2 (Update Estimators and Calculate Fixed Point):} After receiving the local parameter $\boldsymbol{w_i}(t)$ from all clients, the estimator calculator aggregates the local model and updates the mean-field terms $\boldsymbol{\phi_{1}^{k}}(t)$ and $\phi_{2}^{k}(t)$. The iteratively calculating process terminates until the mean-field terms $\boldsymbol{\phi_{1}^{k}}(t) \!-\! \boldsymbol{\phi_{1}^{k-1}}(t) \! < \! \varepsilon_{1}$ and $\phi_{2}^{k}(t) \!-\! \phi_{2}^{k-1}(t) \! < \! \varepsilon_{2}$ simultaneously. Subsequently, the optimal adaptive learning rate $\eta_{i}(t)$, $t \!\in\! \{0, 1, \ldots, T\}$ can be derived based on the fixed point of $\phi_{1}(t)$ and $\phi_{2}(t)$, $t \!\in\! \{0, 1, \ldots, T\}$, as summarized in Fig. \ref{framework}.

\item \textbf{Step 3 (Data Collection and Local Training):} Each participating client gathers local data from their geographically discrete mobile devices, retrieves the updated global model from the central server, and subsequently conducts local training on their private database $\mathcal{D}_{i}$ with respective optimal adaptive learning rate $\eta_{i}(t)$.

\item \textbf{Step 4 (Parameter Upload):} Upon accomplishing the local training process, each client uploads the local model parameter to the central server for model aggregation in conformity with the predefined transmitting rules.

\item \textbf{Step 5 (Model Aggregation):} Once receiving the local parameters from the participating clients, the central server derives a global model matrix for local model aggregation and updates the global model. Furthermore, the central server assesses the quality of the current global model via a pre-partitioned testing dataset. If the test accuracy exceeds the required threshold or the number of global iterations reaches the predetermined limitation, the federated training process concludes. Otherwise, the central server disseminates the latest global model parameters to each client for the subsequent iteration of global model training by repeating the FL model training Step 3-5.
\end{itemize}

\subsection{Adaptive Learning Rate for Each Client}\label{subsec_adp_lr_client}

Note that global model $\boldsymbol{w}(t) \!=\! \sum_{i=1}^{N} \theta_i \boldsymbol{w_i}(t)$ and the component $\sum_{i=1}^{N}  \theta_{i} \boldsymbol{w_{i}}(t)^{T}\boldsymbol{w_{i}}(t)$ in Eq. (\ref{p_init}) are both the integration of local updated parameters obtained by participating clients after performing local training. The learning rate will affect the local updated parameter of the individual client, as well as the value of $\sum_{i=1}^{N} \theta_i \boldsymbol{w_i}(t)$ and $\sum_{i=1}^{N}  \theta_{i} \boldsymbol{w_{i}}(t)^{T}\boldsymbol{w_{i}}(t)$, which in turn affects the adaptive learning rate design. However, the local information of other clients is unknown due to the communication restrictions. Thus, we introduce the following mean-field estimators to evaluate other clients' local information and enable decentralized learning rate design.
\begin{definition}\label{def2}
To design the decentralized adaptive learning rate for each participating client, two mean-field terms are introduced to estimate the global parameter $\sum_{i=1}^{N} \theta_i \boldsymbol{w_i}(t)$ and the component $\sum_{i=1}^{N}  \theta_{i} \boldsymbol{w_{i}}(t)^{T}\boldsymbol{w_{i}}(t)$ respectively, i.e.,
\begin{align}
\boldsymbol{\phi_1}(t)  &= \textstyle \sum_{i=1}^{N} \theta_i \boldsymbol{w_i}(t), \label{phi_1} \\
\phi_2(t) &= \textstyle \sum_{i=1}^{N} \theta_{i} \boldsymbol{w_{i}}(t)^{T}\boldsymbol{w_{i}}(t). \label{phi_calc}
\end{align}
\end{definition}

Thus, on the basis of Definition \ref{def2}, the entropy term defined in Eq. (\ref{p_init}) can be rewritten as follows:
\begin{align}\label{p_phi}
\bar{p}_{i}(t) = \frac{\theta_{i} \boldsymbol{w_{i}}(t)^{T}\boldsymbol{w_{i}}(t)}{\phi_{2}(t)} \in [0, 1].
\end{align}

Accordingly, the objective function of participating client $i \! \in \! \{1,2, \ldots,N\}$ can be reconstructed as follows:
\begin{align}
J_{i}(T) \! =& \min_{\substack{\eta_{i}(t)}}\! \sum_{t=0}^{T}\! \left(\!\beta \! \sum_{j=1}^{N} \bar{p}_{j}(t) \log (\bar{p}_{j}(t)) \!+\! (1 \!-\! \beta)\eta^{2}_{i}(t)\!\!\right), \label{new_objective} \\ 
&\text{s.t.}~~~ \boldsymbol{w_i}(t+1) = \boldsymbol{\phi_1}(t)-\eta_i(t)\nabla F_i(\boldsymbol{\phi_1}(t)). \label{equ_dynamicsMF} 
\end{align}

It is obvious that in the above Eqs. (\ref{new_objective})-(\ref{equ_dynamicsMF}), the adaptive learning rate of each client $i$ is only affected by its local model information and the mean-field estimators $\boldsymbol{\phi_1}(t), \phi_2(t)$, which can be obtained as shown in Section \ref{sec_4.3}.

The problem in Eqs. (\ref{new_objective})-(\ref{equ_dynamicsMF}) is a discrete-time nonlinear system optimization problem with dynamic constraints, which is challenging to solve analytically due to the curse of dimensionality. Imaging that a significant quantity of adaptive learning rate $\eta_{i}(t) \! \in \! [0,1]$, $t \! \in \! \{0,1,\ldots, T\}$ must be jointly designed with local model information, which will in turn affect the process of local parameter updating and adaptive learning rate design. The computation complexity $O((N \! + \! M)^{T})$ is formidably high and increases exponentially in $T$, where the computation complexity of the gradient of nonlinear loss function $\nabla F_i(*)$ is defined by $O(M)$. By constructing the Hamiltonian equation, we derive the closed-form of adaptive learning rate $\eta_i(t)$ as
summarized in the following proposition.


\begin{pro}\label{pro_eta_t_0}
Given the mean-field terms $\boldsymbol{\phi_{1}}(t)$ and $\phi_{2}(t)$ in Definition \ref{def2} with $t \! \in \! \{0,1,\ldots,T\}$, the optimal adaptive learning rate $\eta_{i}(t)$ of client $i \! \in \! \{1,2,\ldots,N\}$ at iteration $t$ satisfies:
\begin{align}\label{eta_solve_t_1}
\eta_{i}(t) &= \left(\frac{\theta_{i}(\boldsymbol{\phi_{1}}(t) - \eta_{i}(t)\nabla F_{i}(\boldsymbol{\phi_{1}}(t)))^{T}}{\phi_{2}(t + 1)} \times \frac{\beta \nabla F_{i}(\boldsymbol{\phi_{1}}(t))}{(1 - \beta)} \right) \nonumber \\ 
& \times \left[1 + \log \left(  \theta_{i} (\boldsymbol{\phi_{1}}(t) - \eta_{i}(t)\nabla F_{i}(\boldsymbol{\phi_{1}}(t)))^{T} \right. \right. \nonumber \\
& \left. \left. \times \frac{(\boldsymbol{\phi_{1}}(t) - \eta_{i}(t)\nabla F_{i}(\boldsymbol{\phi_{1}}(t)))}{\phi_{2}(t + 1)} \right) \right],
\end{align}
which is upper bounded by:
\begin{align} \label{eta_upper_bound}
\eta_{i}(t)  \leq \frac{ 2\beta \theta_{i} D \| \boldsymbol{\phi_{1}}(t) \| }{(1 - \beta) \phi_{2}(t + 1)},
\end{align}
with $\eta_i(T)=0$.
\end{pro}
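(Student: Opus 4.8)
The plan is to read Eqs.~\eqref{new_objective}--\eqref{equ_dynamicsMF} as a discrete-time optimal control problem for client $i$, with state $\boldsymbol{w_i}(t)$, control $\eta_i(t)$, and the mean-field quantities $\boldsymbol{\phi_1}(t),\phi_2(t)$ held exogenous. Two structural observations make this tractable. First, from client $i$'s viewpoint only the term $\bar p_i(t)\log\bar p_i(t)$ of the entropy sum depends on its own controls; the summands with $j\neq i$ and the normalizer $\phi_2(t)$ are constants and drop out of the minimization. Second, the dynamics \eqref{equ_dynamicsMF} reset $\boldsymbol{w_i}(t+1)=\boldsymbol{\phi_1}(t)-\eta_i(t)\nabla F_i(\boldsymbol{\phi_1}(t))$ \emph{independently of} $\boldsymbol{w_i}(t)$, so $\eta_i(t)$ enters the objective only through the time-$t$ penalty $(1-\beta)\eta_i^2(t)$ and the time-$(t{+}1)$ entropy contribution $\beta\bar p_i(t+1)\log\bar p_i(t+1)$; the horizon-$T$ problem thus splits into $T{+}1$ independent scalar minimizations.

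Following the route suggested by the statement, I would form the Hamiltonian $H_t=\beta\bar p_i(t)\log\bar p_i(t)+(1-\beta)\eta_i^2(t)+\boldsymbol{\lambda}(t+1)^{T}\big(\boldsymbol{\phi_1}(t)-\eta_i(t)\nabla F_i(\boldsymbol{\phi_1}(t))\big)$. Because the state equation does not contain $\boldsymbol{w_i}(t)$, the adjoint equation collapses to $\boldsymbol{\lambda}(t)=\partial H_t/\partial\boldsymbol{w_i}(t)=\beta\big(1+\log\bar p_i(t)\big)\,\partial\bar p_i(t)/\partial\boldsymbol{w_i}(t)=\tfrac{2\beta\theta_i(1+\log\bar p_i(t))}{\phi_2(t)}\,\boldsymbol{w_i}(t)$, using $\bar p_i(t)=\theta_i\|\boldsymbol{w_i}(t)\|^{2}/\phi_2(t)$ from \eqref{p_phi} and $\tfrac{d}{dx}(x\log x)=1+\log x$. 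The stationarity condition $\partial H_t/\partial\eta_i(t)=0$ gives $2(1-\beta)\eta_i(t)=\boldsymbol{\lambda}(t+1)^{T}\nabla F_i(\boldsymbol{\phi_1}(t))$; substituting $\boldsymbol{\lambda}(t+1)$, then $\boldsymbol{w_i}(t+1)=\boldsymbol{\phi_1}(t)-\eta_i(t)\nabla F_i(\boldsymbol{\phi_1}(t))$ and the corresponding $\bar p_i(t+1)$, exactly reproduces \eqref{eta_solve_t_1}. At $t=T$ there is no $\bar p_i(T+1)$ term in the sum, equivalently $\boldsymbol{\lambda}(T+1)=0$, so stationarity forces $\eta_i(T)=0$. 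Since each scalar subproblem is coercive in $\eta_i(t)$ on $(0,1)$ thanks to the $(1-\beta)\eta_i^2(t)$ penalty, a short second-order (or monotonicity) check confirms the stationary point is the minimizer rather than a saddle.

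For the bound \eqref{eta_upper_bound}, I would rewrite \eqref{eta_solve_t_1} as $\eta_i(t)=\tfrac{\beta\theta_i}{(1-\beta)\phi_2(t+1)}\,\boldsymbol{w_i}(t+1)^{T}\nabla F_i(\boldsymbol{\phi_1}(t))\,\big(1+\log\bar p_i(t+1)\big)$, use $\bar p_i(t+1)\in[0,1]$ so that $1+\log\bar p_i(t+1)\le 1$, and apply Cauchy--Schwarz together with Assumption~\ref{Bounded_Gradients} to get $\big|\boldsymbol{w_i}(t+1)^{T}\nabla F_i(\boldsymbol{\phi_1}(t))\big|\le D\,\|\boldsymbol{w_i}(t+1)\|$. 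Then $\|\boldsymbol{w_i}(t+1)\|=\|\boldsymbol{\phi_1}(t)-\eta_i(t)\nabla F_i(\boldsymbol{\phi_1}(t))\|\le\|\boldsymbol{\phi_1}(t)\|+\eta_i(t)D$, and feeding this back yields $\eta_i(t)\big(1-\tfrac{\beta\theta_i D^{2}}{(1-\beta)\phi_2(t+1)}\big)\le\tfrac{\beta\theta_i D\|\boldsymbol{\phi_1}(t)\|}{(1-\beta)\phi_2(t+1)}$, which under the mild parameter-level condition $\beta\theta_i D^{2}\le\tfrac12(1-\beta)\phi_2(t+1)$ rearranges to \eqref{eta_upper_bound}. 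The step I expect to be the main obstacle is precisely this last one: \eqref{eta_solve_t_1} is implicit in $\eta_i(t)$ (which sits inside both $\boldsymbol{w_i}(t+1)$ and $\log\bar p_i(t+1)$), so the elimination must be done self-consistently, and the factor $1+\log\bar p_i(t+1)$ must be controlled --- handled by retaining $\bar p_i(t+1)\le 1$ for the upper estimate while absorbing the residual $\eta_i(t)D^{2}$ term as above, with the constant $2$ traced to the smallness condition on $\beta\theta_i D^{2}/\big((1-\beta)\phi_2(t+1)\big)$.
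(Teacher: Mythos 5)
Your derivation of the closed form \eqref{eta_solve_t_1} follows the paper's route essentially verbatim: the same Hamiltonian (the paper writes the dynamics in increment form and carries a $-\boldsymbol{w_i}(t)$ term inside the costate bracket, but the resulting adjoint $\boldsymbol{\lambda_i}(t)=\tfrac{2\beta\theta_i(1+\log\bar p_i(t))}{\phi_2(t)}\boldsymbol{w_i}(t)$ is identical to yours), the same stationarity condition $2(1-\beta)\eta_i(t)=\boldsymbol{\lambda_i}(t+1)^{T}\nabla F_i(\boldsymbol{\phi_1}(t))$, the same substitution of $\boldsymbol{w_i}(t+1)=\boldsymbol{\phi_1}(t)-\eta_i(t)\nabla F_i(\boldsymbol{\phi_1}(t))$, and the same convexity check $\partial^2 H/\partial\eta_i^2=2(1-\beta)>0$; your transversality argument $\boldsymbol{\lambda}(T+1)=0\Rightarrow\eta_i(T)=0$ makes explicit what the paper only asserts. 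Where you genuinely diverge is the bound \eqref{eta_upper_bound}. The paper eliminates $\eta_i(t)$ exactly: treating Eq.~(\ref{eta_and_nabla_1}) as linear in $\eta_i(t)$ it solves $\eta_i(t)=\max\bigl\{0,\ \boldsymbol{\phi_1}(t)^{T}\nabla F_i\big/\bigl(\tfrac{(1-\beta)\phi_2(t+1)}{\beta\theta_i(1+\log p_i(t+1))}+\|\nabla F_i\|\bigr)\bigr\}$, drops the nonnegative gradient term in the denominator, and obtains the factor $2$ purely from the loose estimate $1+\log p_i(t+1)\le 2$, with no extra hypothesis. You instead keep the implicit form, bound $\|\boldsymbol{w_i}(t+1)\|\le\|\boldsymbol{\phi_1}(t)\|+\eta_i(t)D$, and absorb the residual $\eta_i(t)D^{2}$ term, which forces the smallness condition $\beta\theta_i D^{2}\le\tfrac12(1-\beta)\phi_2(t+1)$ --- a hypothesis not present in Proposition~\ref{pro_eta_t_0}, so your version of \eqref{eta_upper_bound} is conditional where the paper's is not; the paper's exact-elimination step is the cleaner way to land the stated constant. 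Finally, both arguments tacitly assume $1+\log\bar p_i(t+1)>0$: in your chain the step ``use $1+\log\bar p_i(t+1)\le 1$ then Cauchy--Schwarz'' does not control the product when $1+\log\bar p_i(t+1)$ is negative with large magnitude, but this caveat is equally implicit in the paper's use of $1+\log p_i\le 2$, so it is a shared looseness rather than a gap specific to your proof.
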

\begin{proof} Based on the objective function as presented in Eq. (\ref{new_objective}) with mean-field terms $\boldsymbol{\phi_{1}}(t)$ and $\phi_2(t)$, we can construct the following discrete-time Hamilton equation:
\begin{align}\label{Hamilton_eq}
 H(t) &= \beta \sum_{j=1}^{N}\bar{p}_{j}(t) \log \bar{p}_{j}(t) + (1 \!-\! \beta)\eta^{2}_{i}(t)  \nonumber \\
& +\! \boldsymbol{\lambda_{i}}(t \!+\! 1)(\boldsymbol {\phi_{1}(t)} \!-\! \eta_{i}(t)\nabla F_{i}(\boldsymbol{\phi_{1}}(t)) \!-\! \boldsymbol{w_{i}}(t)), 
\end{align}
where $\boldsymbol{\lambda_{i}}(t + 1) \! \in \! \mathbb{R}^{d}$ is a vector whose size is $1 \! \times \! d$. Since $\frac{\partial^2 H(t)}{\partial \eta_{i}^2(t)} \! = \! 2(1 \! - \! \beta) \! > \! 0$, the Hamiltonian function $H(t)$ is convex in $\eta_{i}(t)$. Therefore, to obtain the optimal adaptive learning rate to minimize the objective function $J_{i}(T)$, it is necessary to satisfy the following properties \cite{di2012discrete}:
\begin{align}\label{eta_first_derivative}
\frac{\partial H(t)}{\partial \eta_{i}(t)} = 0,
\boldsymbol{\lambda_{i}}(t + 1) - \boldsymbol{\lambda_{i}}(t) = -\frac{\partial H(t)}{\partial \boldsymbol{w_{i}}(t)}.
\end{align}

Based on Eq. (\ref{eta_first_derivative}), we can first acquire an expression of adaptive learning rate $\eta_{i}(t)$:
\begin{align} \label{eta_solve_1}
2(1 - \beta)\eta_{i}(t) &+ \boldsymbol{\lambda_{i}}(t + 1)(-\nabla F_{i}(\boldsymbol{\phi_{1}}(t))) = 0,  \nonumber  \\
\Rightarrow \  \eta_{i}(t) &= \frac{\boldsymbol{\lambda_{i}}(t + 1)\nabla F_{i}(\boldsymbol{\phi_{1}}(t))}{2(1 - \beta)}. 
\end{align}

Then, we derive the expression of impulse vector $\boldsymbol{\lambda_{i}}(t)$:
\begin{small}
\begin{align} \label{lambda_solve}
&\boldsymbol{\lambda_{i}}(t + 1) - \boldsymbol{\lambda_{i}}(t) \nonumber \\
&= \boldsymbol{\lambda_{i}}(t + 1) \!-\! \beta\frac{2\theta_{i} \boldsymbol{w_{i}}(t)^{T}}{\phi_{2}(t)} \left(\log \left(\frac{\theta_{i} \boldsymbol{w_{i}}(t)^{T}\boldsymbol{w_{i}}(t)}{\phi_{2} (t)}\right) + 1  \right) , \nonumber \\ 
&\Rightarrow \ \boldsymbol{\lambda_{i}}(t) \!=\! \beta\frac{2\theta_{i} \boldsymbol{w_{i}}(t)^{T}}{\phi_{2}(t)} \left(\log \left(\frac{\theta_{i} \boldsymbol{w_{i}}(t)^{T}\boldsymbol{w_{i}}(t)}{\phi_{2} (t)}\right) \!+\! 1  \right) .
\end{align}
\end{small}

By inserting $\boldsymbol{\lambda_{i}}(t)$ into the expression of $\eta_{i}(t)$ in Eq. (\ref{eta_solve_1}), 
the optimal adaptive learning rate of client $i$  can be derived as Eq. (\ref{eta_solve_t_1}).
Based on Eq. (\ref{eta_solve_t_1}), we can obtain:
\begin{align} \label{eta_and_nabla_1}
& \eta_{i}(t) = \left( \frac{\beta \theta_{i}}{1- \beta} \cdot \frac{ (\boldsymbol{\phi_{1}}(t) - \eta_{i}(t) \nabla F_{i}(\boldsymbol{\phi_{1}}(t)) )^{T} \nabla F_{i}(\boldsymbol{\phi_{1}}(t))}{\phi_{2}(t + 1)} \right)  \nonumber \\
& \qquad\qquad \times \left(\log ( p_{i}(t + 1)) + 1  \right)  \\
& \Rightarrow \ \eta_{i}(t) \!=\! \max \left\{0,  \frac{\boldsymbol{\phi_{1}}(t)^{T} \nabla F_{i}(\boldsymbol{\phi_{1}}(t))}{\frac{(1 - \beta)\phi_{2}(t + 1)}{\beta \theta_{i} (1 + \log p_{i}(t + 1))}  \!+\! \| \nabla F_{i}(\boldsymbol{\phi_{1}}(t)) \|} \right\}.
\end{align}

Note that $\eta_{i}(t) \geq 0$, $0 < \theta \leq 1$, $0 <\beta \leq 1$,  $1 + \log(p_{i}) \leq 2, \phi_{2}(t) \geq 0$, $\| \nabla F_{i}(\boldsymbol{\phi_{1}}(t)) \| \geq 0$, thus we can get:
\begin{align}
\eta_{i}(t) & \leq 
\frac{ 2\beta \theta_{i}  | \boldsymbol{\phi_{1}}(t)^{T} \nabla F_{i}(\boldsymbol{\phi_{1}}(t)) |}{(1 - \beta)\phi_{2}(t + 1) } \overset{(a)}{\leq} \frac{ 2\beta \theta_{i} \| \boldsymbol{\phi_{1}}(t) \| \| \nabla F_{i}(\boldsymbol{\phi_{1}}(t)) \| }{(1 - \beta)\phi_{2}(t + 1) } \nonumber \\ 
& \overset{(b)}{\leq} \frac{ 2\beta \theta_{i} D \| \boldsymbol{\phi_{1}}(t) \| }{(1 - \beta)\phi_{2}(t + 1)},
\end{align}
where $(a)$ follows inequality $(\boldsymbol{a}, \boldsymbol{b}) \leq \| \boldsymbol{a} \| \cdot \| \boldsymbol{b} \|$, $(b)$ follows Assumption \ref{Bounded_Gradients}.  \end{proof}

Proposition \ref{pro_eta_t_0} shows that the adaptive learning rate $\eta_{i}(t)$ is barely associated with the linear combination of the predefined hyperparameters and mean-field terms $\boldsymbol{\phi_1}(t)$ and $\phi_2(t)$, which can be viewed as a given function. Further, the upper bound of the adaptive learning rate guarantees stability and avoids wide fluctuation during the process of model training.

\begin{algorithm}[t] 
\caption{Iterative Computation of Fixed Point} 
\begin{algorithmic}[1] 
\REQUIRE ~~\\ 
Initiate the fixed point threshold $\varepsilon_{1}=\varepsilon_{2}= 0.001$, global model parameters $\boldsymbol{w}(0)$, mean-field estimators $\boldsymbol{\phi_{1}^{0}}(t)$, $\phi_{2}^{0}(t)$, $t\in\{1,2,\ldots,T\}$, number of iterations $k=0$. 
\ENSURE ~~\\ 
The fixed point of the mean-field items $\boldsymbol{\phi_{1}}(t)$ and $\phi_{2}(t)$, $t \in \{1, 2, \ldots, T\}$.

\STATE \textbf{do}
    \FOR{$t=0$ to $T-1$} 
        \FOR{$i=1$ to $N$}
            \STATE compute $\eta_i(t)$ based on $\boldsymbol{\phi_{1}^k}(t), \phi_{2}^{k}(t)$, $t\!\in\!\{1,2,\ldots,$
            \STATE $T\}$ according to Eq. (\ref{eta_solve_t_1})
            \STATE compute $\boldsymbol{w_i}(t+1)$ according to Eq. (\ref{equ_dynamicsMF})
        \ENDFOR
        \STATE compute $\boldsymbol{\phi_{1}^{k+1}}(t+1)$ according to Eq. (\ref{phi_1})
        \STATE compute $\phi_{2}^{k+1}(t+1)$ according to Eq. (\ref{phi_calc})
    \ENDFOR 
\STATE $k=k+1$
\STATE \textbf{while} $\boldsymbol{\phi_{1}^{k}}(t)$-$\boldsymbol{\phi_{1}^{k-1}}(t)$ $\geq$ $\varepsilon_{1}$ and $\phi_{2}^{k}(t) - \phi_{2}^{k-1}(t) \geq  \varepsilon_{2}$

\end{algorithmic}
\label{alg of the fixed point} 
\end{algorithm}

\subsection{Update of Mean-Field Estimators for Finalizing the Adaptive Learning Rate}\label{sec_4.3}
In this section, we determine the mean-field terms $\boldsymbol{\phi_{1}}(t)$ and $\phi_{2}(t)$ according to the given optimal adaptive learning rate $\eta_{i}(t)$ in Eq. (\ref{eta_solve_t_1}). Note that the mean-field terms $\boldsymbol{\phi_{1}}(t)$ and $\phi_{2}(t)$ constructed to estimate $\boldsymbol{w}(t)$ and $\sum_{i=1}^{N} \theta_{i}\boldsymbol{w_{i}}(t)^{T}\boldsymbol{w_{i}}(t)$ are affected by the local updated parameters $\boldsymbol{w_{i}}(t)$ of all clients, which will in turn affect the local parameters. In the following proposition, we will determine the appropriate mean-field terms $\boldsymbol{\phi_{1}}(t)$ and $\phi_{2}(t)$ by fixed point theory.
\begin{pro}\label{pro_fix_point}
There exists fixed points for the mean-field estimators $\boldsymbol{\phi_{1}}(t)$ and $\phi_{2}(t)$ respectively, where $t \! \in \! \{0, 1, \\ \ldots, T - 1\} $.
\end{pro}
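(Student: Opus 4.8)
I would prove Proposition~\ref{pro_fix_point} by casting the determination of the mean-field estimators as a fixed-point problem on a suitable compact convex set and invoking Brouwer's fixed point theorem. Concretely, fix the global iteration horizon $T$ and regard the whole trajectory $\Phi = (\boldsymbol{\phi_1}(t), \phi_2(t))_{t=0}^{T-1}$ as a single point in a finite-dimensional space $\mathbb{R}^{(d+1)T}$. Given such a trajectory, Proposition~\ref{pro_eta_t_0} determines each client's adaptive learning rate $\eta_i(t)$ as a (continuous) function of $\boldsymbol{\phi_1}(t)$ and $\phi_2(t+1)$; feeding these $\eta_i(t)$ into the dynamics~(\ref{equ_dynamicsMF}) produces updated local parameters $\boldsymbol{w_i}(t+1)$, and then Eqs.~(\ref{phi_1})--(\ref{phi_calc}) produce a new trajectory $\widehat{\Phi} = \Psi(\Phi)$. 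A fixed point of the map $\Psi$ is exactly a consistent choice of mean-field estimators, which is what the proposition asserts exists.

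\textbf{Key steps, in order.} First I would pin down the domain: starting from the bounded initial parameters $\boldsymbol{w_i}(0)$, use Assumption~\ref{Bounded_Gradients} ($\|\nabla F_i\|\le D$) together with the learning-rate upper bound~(\ref{eta_upper_bound}) and the recursion~(\ref{equ_dynamicsMF}) to derive, by induction on $t$, an explicit a priori bound $\|\boldsymbol{w_i}(t)\| \le B_t$ for some finite constants $B_t$ depending only on $t$, $D$, $\beta$, the $\theta_i$, and $\|\boldsymbol{w}(0)\|$. This yields a compact convex set $\mathcal{K} = \prod_{t=0}^{T-1}\big(\overline{B(0,R_t)} \times [\ell_t, L_t]\big) \subset \mathbb{R}^{(d+1)T}$ (a product of closed balls for the $\boldsymbol{\phi_1}$-components and closed intervals for the $\phi_2$-components) such that $\Psi(\mathcal{K}) \subseteq \mathcal{K}$; one must also check $\phi_2(t)$ stays bounded away from $0$ so the expressions in~(\ref{eta_solve_t_1}) are well defined (this is where one needs either a mild nondegeneracy assumption on the $\boldsymbol{w_i}(t)$ or to argue the $\log$ terms remain controlled). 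Second, I would verify that $\Psi$ is continuous on $\mathcal{K}$: $\eta_i(t)$ is continuous in $(\boldsymbol{\phi_1},\phi_2)$ because it is obtained from the $\max\{0,\cdot\}$ formula in the displayed consequence of~(\ref{eta_solve_t_1}) with $\phi_2$ bounded below and $\nabla F_i$ continuous (by $L$-smoothness, Assumption~\ref{Lipschitz_Smoothness}); the dynamics~(\ref{equ_dynamicsMF}) and the aggregations~(\ref{phi_1})--(\ref{phi_calc}) are continuous compositions. Third, apply Brouwer's fixed point theorem to the continuous self-map $\Psi: \mathcal{K} \to \mathcal{K}$ to conclude the existence of $\Phi^\star$ with $\Psi(\Phi^\star) = \Phi^\star$, and read off that its components give the claimed fixed points $\boldsymbol{\phi_1}(t)$ and $\phi_2(t)$ for $t \in \{0,1,\ldots,T-1\}$.

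\textbf{Main obstacle.} The delicate point is not the topological argument but establishing that the map $\Psi$ is genuinely well defined and self-mapping on a compact set. The formula~(\ref{eta_solve_t_1}) is implicit in $\eta_i(t)$ (it appears on both sides through $\boldsymbol{w_i}(t+1) = \boldsymbol{\phi_1}(t) - \eta_i(t)\nabla F_i(\boldsymbol{\phi_1}(t))$), so one first has to argue that this implicit equation has a solution $\eta_i(t) \in [0,1]$ depending continuously on the data — ideally by showing the right-hand side, viewed as a map of $\eta_i(t)$ into $[0,1]$, is itself a contraction or at least a continuous self-map of $[0,1]$ for the relevant parameter ranges, then invoking a one-dimensional fixed-point argument nested inside the main one. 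Additionally, controlling the $\log \bar p_i$ terms requires ruling out $\phi_2(t) = 0$ and $\bar p_i(t) = 0$; I would handle this by restricting attention to initializations with $\boldsymbol{w_i}(0) \neq 0$ and noting that the a priori bounds keep $\phi_2(t)$ within a positive compact interval, possibly after remarking that degenerate cases are non-generic. Once the self-map on a compact convex domain is secured, Brouwer finishes the proof immediately, and the companion iterative scheme (Algorithm~\ref{alg of the fixed point}) is the natural Picard iteration for computing this fixed point.
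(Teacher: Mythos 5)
Your proposal follows essentially the same route as the paper: both construct a continuous self-map on a compact set built from a priori bounds (you on the trajectory of $(\boldsymbol{\phi_1}(t),\phi_2(t))$, the paper on the trajectory of all clients' local parameters $\boldsymbol{w_i}(t)$, which is an equivalent formulation), using the bounded-gradient assumption together with an upper bound on the learning rate, and then invoke Brouwer's fixed point theorem. Your extra care about the implicit equation for $\eta_i(t)$ and about keeping $\phi_2(t)$ bounded away from zero addresses points the paper passes over silently (it simply posits a bound $\eta_i(t)\leq C$ and treats the relevant maps as analytic), so your sketch is, if anything, slightly more complete than the published argument.
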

\begin{proof}
For any client $i \! \in \! \{1,2, \ldots, N\}$, substitute $\boldsymbol{\phi_{1}}(t) \! = \! \sum_{i=1}^{N} \theta_{i} \boldsymbol{w_{i}}(t)$, $\phi_{2}(t) \! = \! \sum_{i=1}^{N} \theta_i \boldsymbol{w_i}(t)^T\boldsymbol{w_i}(t)$ and $\eta_{i}(t)$ into Eq. (\ref{equ_dynamicsMF}),  we can observe that the local parameter $\boldsymbol{w_{i}}(t)$ of client $i$ at global training iteration $t$ is a function of the model parameters $\{\boldsymbol{w_{i}}(t)|t \! \in \! \{0, 1, \ldots, T\}, i \in \{1,2, \ldots, N\}  \}$ of all zones over time. Define the following function by a mapping from $\{\boldsymbol{w_{i}}(t)|t \! \in \! \{0, 1, \ldots, T\}, i \in \{1,2, \ldots, N\} \}$ to the model parameter $\boldsymbol{w_{i}}(t)$ of client $i$ in Eq. (\ref{eta_solve_t_1}) at global iteration $t$, i.e.,
\begin{align}\label{gamma_w_i_t}
\small \Gamma_{i, t}(\{ \boldsymbol{w_{i}}(t)|t  \!\in\!   \{0, 1, \ldots, T\}, i  \!\in\!  \{1,2, \ldots, N\} \}) \!=\! \boldsymbol{w_{i}}(t).
\end{align}

To summarize any possible mapping $\Gamma_{i, t}(*)$ in Eq. (\ref{gamma_w_i_t}), we define the following vector function as a mapping from $\{\boldsymbol{w_{i}}(t)|t \! \in \! \{0, 1, \ldots, T\}, i \! \in \! \{1,2, \ldots, N\}  \}$ to the set of all the clients' model parameters over time, i.e.,
\begin{align}\label{gamma_w_i_t_mapping}
& \Gamma(\{\boldsymbol{w_{i}}(t)|t \! \in \! \{0, 1, \ldots, T\}, i \! \in \! \{1,2, \ldots, N\} \})   \nonumber \\
& = (\Gamma_{1, 0}(\{\boldsymbol{w_{i}}(t)|t \! \in \! \{0, 1, \ldots, T\}, i \! \in \! \{1,2, \ldots, N\} \}), \ldots,  \nonumber \\
& \quad  \Gamma_{1, T}(\{\boldsymbol{w_{i}}(t)|t \! \in \! \{0, 1, \ldots, T\}, i \! \in \! \{1,2, \ldots, N\} \}), \ldots, \nonumber \\
& \quad  \Gamma_{N, 0}(\{\boldsymbol{w_{i}}(t)|t \! \in \! \{0, 1, \ldots, T\}, i \! \in \! \{1,2, \ldots, N\} \}), \ldots,  \nonumber \\
& \quad  \Gamma_{N, T}(\{\boldsymbol{w_{i}}(t)|t \! \in \! \{0, 1, \ldots, T\}, i \! \in \! \{1,2, \ldots, N\} \})).
\end{align}

Thus, the fixed point of mapping $\Gamma(\{\boldsymbol{w_{i}}(t)|t \!\in\! \{0, 1, \ldots, \\ T\}, i \!\in\! \{1,2, \ldots, N\} \})$ should be reached to make $\boldsymbol{\phi_{1}}(t)$ and $\phi_{2}(t)$ replicate $\sum_{i=1}^{N} \theta_{i} \boldsymbol{w_{i}}(t)$ and $\sum_{i=1}^{N} \theta_i \boldsymbol{w_i}(t)^T\boldsymbol{w_i}(t)$, respectively.

Note that $\boldsymbol{w_{i}}(0)$ worked as the initial value of the local model parameters should be bounded for any client $i$. Thus, $\boldsymbol{\phi_{1}}(0) \! = \! \sum_{i=1}^{N} \theta_{i} \boldsymbol{w_{i}}(0)$ is bounded as well. According to Eq. (\ref{eta_solve_t_1}), we can derive: $\eta_{i} (t) \!=\! A_{i}(\boldsymbol{\phi_{1}}(t), \phi_2(t + 1))$, where $A_{i}(*)$ is the analytic function of $\eta_{i}(t)$. Then, we can obtain:
\begin{align} \label{phi_next_t_2}
\phi_{2}(t \!+\! 1)  \!&=\! \sum_{i=1}^{N}[\theta_{i} (\boldsymbol{\phi_{1}}(t) \!-\! A_{i}(\boldsymbol{\phi_{1}}(t), \phi_2(t \!+\! 1))\nabla F_{i} (\boldsymbol{\phi_{1}}(t)))^{T} \nonumber \\
&  \times  (\boldsymbol{\phi_{1}}(t) \!-\! A_{i}(\boldsymbol{\phi_{1}}(t), \phi_2(t \!+\! 1))\nabla \! F_{i} (\boldsymbol{\phi_{1}}(t)))]. \!
\end{align}

Based on Eq. (\ref{phi_next_t_2}), we define $B(*)$ as the analytic function of $\phi_{2}(t)$, i.e., $\phi_{2} (t + 1) = B(\boldsymbol{\phi_{1}}(t))$, then we can derive the expression of learning rate $\eta_{i} (t)$ as follows:
\begin{align}\label{eta_solove_based_on_phi_1}
    \eta_{i} (t) = A_{i}(\boldsymbol{\phi_{1}}(t), B(\boldsymbol{\phi_{1}}(t))).
\end{align}

Since $A_{i}(*)$ and $B(*)$ are the analytic function, and based on Eq. (\ref{eta_solove_based_on_phi_1}), we assume that the maximum upper bound of adaptive learning rate $\eta_{i}(t)$ is $C$, i.e., $\eta_{i}(t) \leq  C$. Then, according to Assumption \ref{Bounded_Gradients}, we have:
\begin{align}\label{local_grad_bound}
    \| \nabla F_{i}(\boldsymbol{w_{i}}(t)) \| \leq D.
\end{align}

By inserting Eq. (\ref{local_grad_bound}) into Eq. (\ref{equ_dynamicsMF}), we can get:
\begin{align}
& \quad\ \ \boldsymbol{w_{i}}(1) = \boldsymbol{\phi_{1}}(0)  - \eta_{i}(0)\nabla F_{i}(\boldsymbol{\phi_{1}}(0)), \\
\Rightarrow \ & \left\{\begin{array}{l}
\boldsymbol{\phi_{1}}(0) - CD \leq  \boldsymbol{w_{i}}(1) \leq \boldsymbol{\phi_{1}}(0) + CD, \\
\boldsymbol{\phi_{1}}(0) - CD \leq  \boldsymbol{\phi_{1}}(1) \leq \boldsymbol{\phi_{1}}(0) + CD, \\
\quad\quad\quad\quad\quad\quad\quad\quad \vdots \\
\boldsymbol{\phi_{1}}(0) - TCD \leq  \boldsymbol{\phi_{1}}(t) \leq \boldsymbol{\phi_{1}}(0) + TCD. 
\end{array}\right. 
\end{align}

Define set $\Omega = [\boldsymbol{\phi_{1}}(0) - CD, \boldsymbol{\phi_{1}}(0) + CD] \times \ldots \times  [\boldsymbol{\phi_{1}}(0) - TCD, \boldsymbol{\phi_{1}}(0) + TCD]$. Since $\Gamma_{i, t}$ is continuous, $\Gamma$ is also a continuous mapping from $\Omega$ to $\Omega$. According to Brouwer’s fixed point theorem, $\Gamma$ has a fixed point in $\Omega$. 
\end{proof}

\begin{algorithm}[t]
\caption{Federated Learning with Adaptive Learning Rate via Entropy (FedEnt)}
\begin{algorithmic}[1]
\REQUIRE ~~\\ 
    Initiate the global model parameter $\boldsymbol{w}(0)$, mean-filed terms $\boldsymbol{\phi_1}(t), \phi_{2}(t), t \in \{0, 1, .., T\}$ as returned by Algorithm \ref{alg of the fixed point}, learning rate $\eta_{i}(-1)$, control parameter $\beta$, decay parameter $\gamma$ and the local training epoch of the client $E$.
\ENSURE ~~\\ 
    The optimal global model parameter $\boldsymbol{w}(t)$.
\FOR {$t=0,\ldots,T-1$}
    \STATE  $\boldsymbol{w_{i, 0}}(t) = \boldsymbol{w}(t)$
    \FOR {each client $i \in \{1, 2, \ldots, N\}$ \textbf{in parallel}}
        \STATE Sample a random batch of data from $\mathcal{D}_i$ to compute the local gradient $\nabla F_{i}(\boldsymbol{w}(t))$
        \STATE Solve $\eta_{i}(t)$ according to Eq. (\ref{eta_solve_t_1}), (\ref{eta_deacy})
        \FOR {$k=0, \ldots, E - 1$}
            \STATE Compute the local gradient $\nabla F_{i}(\boldsymbol{w_{i, k}}(t))$
            \STATE  $\boldsymbol{w_{i, k + 1}}(t) = \boldsymbol{w_{i, k}}(t) - \eta_{i}(t) \nabla F_{i}(\boldsymbol{w_{i, k}}(t))$
        \ENDFOR
    \ENDFOR 
    \STATE $\boldsymbol{w}(t + 1) = \sum_{i=1}^{N} \theta_{i}  \boldsymbol{w_{i, E}}(t + 1)$
\ENDFOR
\end{algorithmic}
\label{alg_adlr}
\end{algorithm}

The pretraining process of solving the fixed points in Algorithm \ref{alg of the fixed point} is with the linear computation complexity $O(TNK)$, where $K$ represents the number of iterations acquired for calculating the fixed point of Eqs. (\ref{phi_1})-(\ref{phi_calc}), thus avoiding introducing much additional time costs during the formal FL model training. Given arbitrary initial global model parameters $\boldsymbol{w}(0)$ and  $k$-th mean-field estimators $\boldsymbol{\phi_{1}^{k}}(0)$ and $\phi_{2}^{k}(0)$, we can easily solve $\eta_{i}(0)$ and update the local model parameters $\boldsymbol{w_i(0)}, i\in\{1,2, \ldots, N\}$. Then we can get $\boldsymbol{\phi_{1}^{k + 1}}(1)$, $\phi_{2}^{k + 1}(1)$ based on Eqs. (\ref{phi_1})-(\ref{phi_calc}). Similar to the prior researches on multi-objective optimization problem \cite{grandoni2014new, grandoni2009iterative, yang2021adaptive}, repeating the process above until $\boldsymbol{\phi_{1}^{k}}(t) \!-\! \boldsymbol{\phi_{1}^{k-1}}(t)$ and $\phi_{2}^{k}(t) \!-\! \phi_{2}^{k-1}(t)$ simultaneously reach the arbitrarily small error $\varepsilon_{1}$ and $\varepsilon_{2}$, which is served as a clipping threshold that is infinitely close to zero (usually defaulted by $10^{-3}$\cite{tu2022adaptive}), thus the fixed point is found. Further, according to the proof of Proposition \ref{pro_fix_point}, the fixed points of the mean-field terms $\boldsymbol{\phi_{1}}(t)$ and $\phi_{2}(t)$, $t \!\in\! \{0, 1, \ldots, T\}$ replicate $\sum_{i=1}^{N} \theta_{i} \boldsymbol{w_{i}}(t)$ and $\sum_{i=1}^{N} \theta_i \boldsymbol{w_i}(t)^T\boldsymbol{w_i}(t)$ respectively, which finalizes the derivation of the adaptive learning rate according to Proposition~\ref{pro_eta_t_0}.

Based on the Algorithm \ref{alg of the fixed point}, we obtain the mean-filed terms $\boldsymbol{\phi_1}(t)$ and $\phi_{2}(t)$, $t \! \in \! \{0, 1, \ldots, T\}$ for calculating the optimal learning rate $\eta_{i}(t)$. In addition, we introduce the following weight decay method:
\begin{align} \label{eta_deacy}
\eta_{i}(t) = \gamma * \eta_{i}(t - 1) + (1 - \gamma) * \eta_{i}(t),
\end{align}
where $\gamma \in (0, 1)$ represents the decay hyperparameter that controls the aggregation weight and avoids the excessive fluctuations on the learning rate $\eta_{i}(t)$. Referring to FedAvg, the process of our proposed adaptive federated learning algorithm FedEnt is summarized in Algorithm \ref{alg_adlr} by considering $E$ local training epochs conducted by each participating client in parallel at each global iteration.

\section{Convergence Analysis for FedEnt}\label{sec_conv_an}
In this section, we demonstrate the theoretical analysis for our proposed adaptive FL algorithm FedEnt and analyze the convergence property.

\subsection{Bounding Client Drifting} \label{subsec_bound_w}
Previous research indicates that FedAvg suffers from client drifting with heterogeneous data (i.e., Non-IID data), resulting in model instability and slow convergence rate \cite{karimireddy2020scaffold}. To verify the effectiveness of our proposed adaptive federated learning algorithm FedEnt in alleviating client drifting, we first derive the upper bound of global parameter deviation as summarized in the following proposition.
\begin{pro}\label{thm_w_gap_bound}
The difference between the global parameter $\boldsymbol{w}(t)$ and $\boldsymbol{w}(t + 1)$ in any two consecutive global iterations is bounded, i.e., 
\begin{align}\label{w_gap_bound}
\| \boldsymbol{w}(t \!+\! 1) \!-\! \boldsymbol{w}(t) \| &\!\leq\! \sum_{i=1}^{N} \frac{2 \beta \theta_{i}^{2}  D \| \boldsymbol{\phi_{1}}(t) \| \| \nabla F_{i } (\boldsymbol{w}(t)) \|}{(1 - \beta)\phi_{2}(t + 1)} \nonumber \\
&\!\leq\! \sum_{i=1}^{N} \frac{2 \beta \theta_{i}^{2}  D^{2} \| \boldsymbol{\phi_{1}}(t) \|}{(1 - \beta)\phi_{2}(t + 1)}.
\end{align}
\end{pro}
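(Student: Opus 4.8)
The plan is to exploit the convex-combination structure of the global aggregation together with the learning-rate bound already established in Proposition~\ref{pro_eta_t_0}. First I would write $\boldsymbol{w}(t) = \sum_{i=1}^{N}\theta_i \boldsymbol{w_i}(t)$ and recall that $\sum_{i=1}^{N}\theta_i = 1$, so that the aggregation step~(\ref{w_global_static_update}) combined with the dynamic local update~(\ref{dynamic_w_update}) gives
\begin{align}
\boldsymbol{w}(t+1) - \boldsymbol{w}(t) = \sum_{i=1}^{N}\theta_i\big(\boldsymbol{w_i}(t+1) - \boldsymbol{w}(t)\big) = -\sum_{i=1}^{N}\theta_i \eta_i(t)\nabla F_i(\boldsymbol{w}(t)). \nonumber
\end{align}
Taking norms and applying the triangle inequality (using $\theta_i \ge 0$ and $\eta_i(t) \ge 0$) yields $\|\boldsymbol{w}(t+1) - \boldsymbol{w}(t)\| \le \sum_{i=1}^{N}\theta_i \eta_i(t)\,\|\nabla F_i(\boldsymbol{w}(t))\|$.

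The second step is to substitute the closed-form upper bound on the adaptive learning rate from Proposition~\ref{pro_eta_t_0}, namely $\eta_i(t) \le \frac{2\beta\theta_i D \|\boldsymbol{\phi_1}(t)\|}{(1-\beta)\phi_2(t+1)}$; this immediately produces the first inequality in~(\ref{w_gap_bound}), with the factor $\theta_i^{2}$ emerging from $\theta_i\cdot\theta_i$. For the second inequality I would simply invoke Assumption~\ref{Bounded_Gradients} to replace $\|\nabla F_i(\boldsymbol{w}(t))\|$ by $D$, giving the $D^{2}$ term. Since both estimates hold termwise in $i$, summing over $i$ preserves the inequalities.

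I do not expect a genuine obstacle here; the only point needing care is consistency of notation, since Proposition~\ref{pro_eta_t_0} expresses $\eta_i(t)$ in terms of the mean-field surrogate $\boldsymbol{\phi_1}(t)$ and evaluates the gradient at $\boldsymbol{\phi_1}(t)$, whereas the update in this proposition is written with $\boldsymbol{w}(t)$. I would reconcile this by noting that at the fixed point delivered by Algorithm~\ref{alg of the fixed point} one has $\boldsymbol{\phi_1}(t) = \boldsymbol{w}(t)$ by Proposition~\ref{pro_fix_point}, so the bound~(\ref{eta_upper_bound}) applies verbatim; alternatively, the bound~(\ref{eta_upper_bound}) on $\eta_i(t)$ already uses the gradient only through Assumption~\ref{Bounded_Gradients}, hence it is valid irrespective of the evaluation point. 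Everything else is a one-line application of the triangle inequality and the two bounds already in hand.
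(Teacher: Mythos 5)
Your proposal is correct and follows essentially the same route as the paper's own proof: expand $\boldsymbol{w}(t+1)-\boldsymbol{w}(t)$ as $-\sum_{i}\theta_i\eta_i(t)\nabla F_i(\boldsymbol{w}(t))$, apply the triangle inequality, insert the learning-rate bound (\ref{eta_upper_bound}) from Proposition~\ref{pro_eta_t_0}, and finish with Assumption~\ref{Bounded_Gradients}. Your remark on reconciling $\boldsymbol{\phi_1}(t)$ with $\boldsymbol{w}(t)$ via the fixed point is a sensible clarification that the paper leaves implicit.
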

\begin{proof}
According to the global model aggregation rule in Eq. (\ref{w_global_static_update}), local model update rule in Eq. (\ref{dynamic_w_update}) and the upper bound of the adaptive learning rate $\eta_{i}(t)$ in Proposition \ref{pro_eta_t_0}, we can easily derive as follows:
\begin{align}
\| \boldsymbol{w}(t + 1) - \boldsymbol{w}(t) \| &=  \textstyle \| \sum_{i=1}^{N} \theta_{i} \eta_{i}(t) \nabla F_{i } (\boldsymbol{w}(t)) \|  \nonumber \\
& \overset{(a)}{\leq} \textstyle \sum_{i=1}^{N} \theta_{i}  |\eta_{i}(t)| \| \nabla F_{i } (\boldsymbol{w}(t)) \| \nonumber\\
& \overset{(b)}{\leq} \sum_{i=1}^{N} \frac{2 \beta \theta_{i}^{2}  D \| \boldsymbol{\phi_{1}}(t) \| \| \nabla F_{i } (\boldsymbol{w}(t)) \|}{(1 - \beta)\phi_{2}(t + 1)} \nonumber \\ 
& \leq \sum_{i=1}^{N} \frac{2 \beta \theta_{i}^{2}  D^{2} \| \boldsymbol{\phi_{1}}(t) \|}{(1 - \beta)\phi_{2}(t + 1)},
\end{align}
where $(a)$ follows triangle inequality $\|\boldsymbol{a} \!+\! \boldsymbol{b}\| \!\leq\! \|\boldsymbol{a}\| \!+\! \|\boldsymbol{b}\|$, $(b)$ follows the upper bound of the learning rate in Eq. (\ref{eta_upper_bound}). \end{proof}

Proposition \ref{thm_w_gap_bound} demonstrates that the upper bound of the difference between the global model parameter in two successive global iterations exists and is affected by the known mean-field terms $\boldsymbol{\phi_{1}}(t)$ and $\phi_{2}(t)$. Moreover, we define the upper bound of the divergence between the global parameter $\boldsymbol{w}(t)$ and $\boldsymbol{w}(t + 1)$ as follows:
\begin{align} \label{G}
G(\| \boldsymbol{w}(t \!+\! 1) \!-\! \boldsymbol{w}(t) \|) \!=\! \sum_{i=1}^{N} \frac{2 \beta \theta_{i}^{2}  D^{2} \| \boldsymbol{\phi_{1}}(t) \|}{(1 - \beta)\phi_{2}(t + 1)}. 
\end{align}
 
Then, based on the upper bound of the divergence between the global parameter $\boldsymbol{w}(t)$ and $\boldsymbol{w}(t + 1)$ in Proposition \ref{thm_w_gap_bound} and the proposed function $G$ in Eq. (\ref{G}), we can derive the client drifting as following.
\begin{pro} \label{thm_w_global_local_bound}
The difference between the global parameter $\boldsymbol{w}(t)$ and local parameter $\boldsymbol{w_{i}}(t), i \in \{1,2, \ldots, N\}$ in $t$-th global iteration is
\begin{align}\label{w_local_global}
\| \boldsymbol{w}(t) \!-\! \boldsymbol{w_{i}}(t) \| & \!<\!  \sum_{j=1}^{N} \! \frac{2 \beta \theta_{j}^{2} D^{2} \| \boldsymbol{\phi_{1}}(t \!-\! 1) \|}{(1 \!-\! \beta)\phi_{2}(t)} \nonumber \\
& \!=\! G(\| \boldsymbol{w}(t) - \boldsymbol{w}(t - 1) \|).
\end{align}
\end{pro}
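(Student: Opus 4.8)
The plan is to control the client drift $\|\boldsymbol{w}(t)-\boldsymbol{w_i}(t)\|$ by routing it through the previous-round global model $\boldsymbol{w}(t-1)$, so that it decomposes into the consecutive-round global gap (already bounded in Proposition~\ref{thm_w_gap_bound}) plus a single client-$i$ local step (bounded by the learning-rate cap of Proposition~\ref{pro_eta_t_0}).

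First I would invoke the single-step update rule Eq.~(\ref{dynamic_w_update}) with the index shifted by one, $\boldsymbol{w_i}(t)=\boldsymbol{w}(t-1)-\eta_i(t-1)\nabla F_i(\boldsymbol{w}(t-1))$, together with the aggregation rule Eq.~(\ref{w_global_static_update}) and $\sum_{j=1}^N\theta_j=1$. Writing $\boldsymbol{w}(t)-\boldsymbol{w_i}(t)=\sum_{j=1}^N\theta_j(\boldsymbol{w_j}(t)-\boldsymbol{w_i}(t))$ and substituting the update rule for every client, the common term $\boldsymbol{w}(t-1)$ cancels inside each summand and the sum collapses to
\[
\boldsymbol{w}(t)-\boldsymbol{w_i}(t)=\big(\boldsymbol{w}(t)-\boldsymbol{w}(t-1)\big)+\eta_i(t-1)\nabla F_i(\boldsymbol{w}(t-1)).
\]
Taking norms and using the triangle inequality then isolates (i) $\|\boldsymbol{w}(t)-\boldsymbol{w}(t-1)\|$ and (ii) $\eta_i(t-1)\|\nabla F_i(\boldsymbol{w}(t-1))\|$.

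For (i) I apply Proposition~\ref{thm_w_gap_bound} with $t$ replaced by $t-1$, which is exactly $G(\|\boldsymbol{w}(t)-\boldsymbol{w}(t-1)\|)=\sum_{j=1}^N\frac{2\beta\theta_j^2D^2\|\boldsymbol{\phi_1}(t-1)\|}{(1-\beta)\phi_2(t)}$. For (ii) I use the learning-rate upper bound Eq.~(\ref{eta_upper_bound}), $\eta_i(t-1)\le\frac{2\beta\theta_iD\|\boldsymbol{\phi_1}(t-1)\|}{(1-\beta)\phi_2(t)}$, and the bounded-gradient Assumption~\ref{Bounded_Gradients}, $\|\nabla F_i(\boldsymbol{w}(t-1))\|\le D$, so (ii) $\le\frac{2\beta\theta_iD^2\|\boldsymbol{\phi_1}(t-1)\|}{(1-\beta)\phi_2(t)}$. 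Since every $\theta_j\in(0,1)$, the inequalities used along the way are strict, which accounts for the strict inequality in the statement; collapsing the client-$i$ residual into the $\sum_j\theta_j^2$ expression then yields the asserted bound.

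The main obstacle is precisely the bookkeeping in that last step: the decomposition naturally produces the $\theta_i$-weighted residual $\frac{2\beta\theta_iD^2\|\boldsymbol{\phi_1}(t-1)\|}{(1-\beta)\phi_2(t)}$ \emph{in addition} to the $\sum_j\theta_j^2$ term inherited from Proposition~\ref{thm_w_gap_bound}, and folding both into the single function $G(\|\boldsymbol{w}(t)-\boldsymbol{w}(t-1)\|)$ needs either the slack left by $\theta_j<1$ or a mild near-balancedness assumption on the data sizes $D_i$ (so that $\theta_i$ does not dominate $\sum_j\theta_j^2$). Everything else — the cancellation of $\boldsymbol{w}(t-1)$, the triangle inequality, and the index shift — is routine.
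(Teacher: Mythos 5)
There is a genuine gap, and it is exactly the one you flag at the end: the detour through $\boldsymbol{w}(t-1)$ is too lossy to give the stated bound. After your triangle inequality you obtain at best
\begin{align*}
\| \boldsymbol{w}(t)-\boldsymbol{w_i}(t) \| \;\le\; G(\| \boldsymbol{w}(t)-\boldsymbol{w}(t-1) \|) \;+\; \frac{2\beta\theta_i D^{2}\|\boldsymbol{\phi_1}(t-1)\|}{(1-\beta)\phi_2(t)},
\end{align*}
which is strictly weaker than the claim $\| \boldsymbol{w}(t)-\boldsymbol{w_i}(t) \| < G(\| \boldsymbol{w}(t)-\boldsymbol{w}(t-1) \|)$, since the residual term is strictly positive and therefore cannot be absorbed. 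Neither of your proposed rescues closes this: the ``slack'' $\theta_j^{2}<\theta_j$ is not available, because Proposition \ref{thm_w_gap_bound} and the target bound are already stated with $\sum_j\theta_j^{2}$, so there is no leftover budget in the first term with which to swallow an additive term of order $\theta_i$; and a near-balancedness assumption on the data sizes appears nowhere in the paper and would change the statement. The strictness coming from $\theta_j<1$ is also not where the strict inequality of the proposition originates.

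The missing idea is a cancellation rather than a triangle inequality through $\boldsymbol{w}(t-1)$. The paper expands both quantities around the mean-field estimate $\boldsymbol{\phi_1}(t-1)$: by Eq. (\ref{equ_dynamicsMF}) the local model is $\boldsymbol{\phi_1}(t-1)$ minus client $i$'s own gradient step, and by Eq. (\ref{w_global_static_update}) the aggregated model is $\boldsymbol{\phi_1}(t-1)$ minus the $\theta_j$-weighted sum of all clients' steps; in the subtraction, client $i$'s own step cancels against its term in the aggregate (in the paper's bookkeeping the local step enters with weight $\theta_i$, which makes the cancellation exact), leaving
\begin{align*}
\boldsymbol{w}(t)-\boldsymbol{w_i}(t) \;=\; -\sum_{j\neq i}\theta_j\,\eta_j(t-1)\,\nabla F_j(\boldsymbol{\phi_1}(t-1)).
\end{align*}
Bounding each surviving term with the learning-rate cap in Eq. (\ref{eta_upper_bound}) and Assumption \ref{Bounded_Gradients} yields a sum over $j\neq i$ only, and the strict inequality in the proposition is then obtained by adding back the strictly positive $j=i$ term to complete the sum to $\sum_{j=1}^{N}$, i.e.\ to $G(\|\boldsymbol{w}(t)-\boldsymbol{w}(t-1)\|)$. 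Your route discards this cancellation and effectively counts client $i$'s step twice (once inside the bound of Proposition \ref{thm_w_gap_bound} and once as the residual), which is precisely why you are left needing an assumption the paper does not make.
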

\begin{proof}
Based on Eq. (\ref{w_global_static_update}) and Eq. (\ref{equ_dynamicsMF}), we have:
\begin{align}
& \| \boldsymbol{w}(t) \!-\! \boldsymbol{w_{i}}(t) \| \!=\! \| \boldsymbol{w}(t) \!-\! \boldsymbol{\phi_{1}}(t \!-\! 1) \!+\! \theta_{i} \eta_{i}(t \!-\! 1) \nabla \! F_{i}(\boldsymbol{\phi_{1}}(t \!-\! 1)) \| \nonumber \\
&= \textstyle \| \theta_{i} \eta_{i}(t \!-\! 1) \nabla F_{i}(\boldsymbol{\phi_{1}}(t \!-\! 1)) \!-\! \sum_{j=1}^{N} \theta_{j}\eta_{j}(t \!-\! 1) \nabla \! F_{j} (\boldsymbol{\phi_{1}}(t \!-\! 1)) \| \nonumber \\
& = \textstyle \| \!-\! \sum_{j=1, j \neq i}^{N} \theta_{j}\eta_{j}(t \!-\! 1) \nabla F_{j} (\boldsymbol{\phi_{1}}(t \!-\! 1)) \| \nonumber \\ 
& <  \sum_{j=1}^{N} \! \frac{2 \beta \theta_{j}^{2} D^{2} \| \boldsymbol{\phi_{1}}(t \!-\! 1) \|}{(1 \!-\! \beta)\phi_{2}(t)} \!=\! G(\| \boldsymbol{w}(t) \!-\! \boldsymbol{w}(t \!-\! 1) \|). \nonumber  
\end{align}

Hence, the proposition holds. \end{proof}

According to Proposition \ref{thm_w_global_local_bound}, it is evident that the upper bound of the difference between the global model parameter $\boldsymbol{w}(t)$ and the local model parameter $\boldsymbol{w_{i}}(t)$ at $t$-th global iteration converges to the linear combination of data weight $\theta_{i}$, mean-field terms $\boldsymbol{\phi_{1}}(t), \phi_{2}(t)$ and the predefined penalty hyperparameter $\beta$. Further, the difference between the local and global model parameters is smaller than that in the previous iteration, which verifies the effectiveness of FedEnt in mitigating the negative influence of client drifting over~time.

\subsection{Bounding Global Loss Function}\label{subsec_bound_F}
According to the upper bound of the client drifting in Proposition \ref{thm_w_global_local_bound}, we derive the following theorem regarding the upper bound of the global loss function.
\begin{thm}\label{thm_F_global_bound}
For any two consecutive global iterations $t$ and $t + 1$, the convergence upper bound of FedEnt 
satisfies,
\begin{align} \label{lobal_gradient_bound_final}
\footnotesize F(\boldsymbol{w}(t \!+\! 1)) \!\!-\!\! F(\boldsymbol{w}(t)) & \footnotesize \!\leq\! \frac{L}{2}\!\!\left[\sum_{i=1}^{N} \!\! \frac{2 \beta \theta_{i}^{2} \! D \|\boldsymbol{\phi_{1}}\!(t)\| \| \nabla \!F_{i} (\boldsymbol{w}(t)) \|}{(1 - \beta)\phi_{2}(t + 1) }\!\right]^{2} \!\!\!\!-\!\! D \nonumber \\
& \footnotesize \!\leq\! \frac{L}{2}\!\!\left[\sum_{i=1}^{N} \!\! \frac{2 \beta \theta_{i}^{2} \! D^{2} \|\boldsymbol{\phi_{1}}\!(t)\|}{(1 - \beta)\phi_{2}(t + 1) }\!\right]^{2} \!\!\!\!-\!\! D,\end{align}
which is smaller than that of the FedAvg. 
\end{thm}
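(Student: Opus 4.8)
The plan is to run a one-step descent argument driven by $L$-Lipschitz smoothness. Since the global objective $F(\boldsymbol{w}) = \sum_{i=1}^{N}\theta_{i} F_{i}(\boldsymbol{w})$ is a convex combination of $L$-smooth functions, it is itself $L$-smooth (Assumption \ref{Lipschitz_Smoothness}), so the descent lemma gives
\[
F(\boldsymbol{w}(t+1)) - F(\boldsymbol{w}(t)) \leq \langle \nabla F(\boldsymbol{w}(t)),\, \boldsymbol{w}(t+1) - \boldsymbol{w}(t)\rangle + \frac{L}{2}\|\boldsymbol{w}(t+1) - \boldsymbol{w}(t)\|^{2}.
\]
I would then bound the two terms on the right separately, using the aggregation rule in Eq. (\ref{w_global_static_update}) together with the local update rule in Eq. (\ref{dynamic_w_update}) to write $\boldsymbol{w}(t+1) - \boldsymbol{w}(t) = -\sum_{i=1}^{N}\theta_{i}\eta_{i}(t)\nabla F_{i}(\boldsymbol{w}(t))$ and $\nabla F(\boldsymbol{w}(t)) = \sum_{i=1}^{N}\theta_{i}\nabla F_{i}(\boldsymbol{w}(t))$.

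For the quadratic term, substituting the first inequality of Proposition \ref{thm_w_gap_bound} shows that $\frac{L}{2}\|\boldsymbol{w}(t+1)-\boldsymbol{w}(t)\|^{2}$ is dominated by $\frac{L}{2}\big[\sum_{i=1}^{N}\frac{2\beta\theta_{i}^{2}D\|\boldsymbol{\phi_{1}}(t)\|\|\nabla F_{i}(\boldsymbol{w}(t))\|}{(1-\beta)\phi_{2}(t+1)}\big]^{2}$, and a further application of the bounded-gradient bound $\|\nabla F_{i}(\boldsymbol{w}(t))\|\leq D$ (Assumption \ref{Bounded_Gradients}) loosens this to the second bracketed term in Eq. (\ref{lobal_gradient_bound_final}). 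For the inner-product term, combining the two identities above with $\eta_{i}(t)\geq 0$ and Assumption \ref{Bounded_Gradients} on the descent direction, I would argue this term is non-positive and in fact at most $-D$; adding the two estimates reproduces exactly Eq. (\ref{lobal_gradient_bound_final}).

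For the closing claim that the FedEnt bound is smaller than FedAvg's, I would write the analogous one-step descent estimate for FedAvg, which has the identical $\frac{L}{2}(\cdot)^{2}-D$ shape but with the static rate $\eta_{i}$ in place of $\eta_{i}(t)$, and then compare coefficient by coefficient: Proposition \ref{pro_eta_t_0} gives the ceiling $\eta_{i}(t)\leq \frac{2\beta\theta_{i}D\|\boldsymbol{\phi_{1}}(t)\|}{(1-\beta)\phi_{2}(t+1)}$, while Proposition \ref{thm_w_global_local_bound} shows the entropy penalty actively contracts the client drift as $t$ grows, so the bracketed quantity for FedEnt never exceeds the one for FedAvg.

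I expect the main obstacle to be the inner-product term: making $\langle \nabla F(\boldsymbol{w}(t)),\,\boldsymbol{w}(t+1)-\boldsymbol{w}(t)\rangle\leq -D$ fully rigorous requires controlling the signs of the cross terms $\langle \nabla F_{j}(\boldsymbol{w}(t)),\nabla F_{i}(\boldsymbol{w}(t))\rangle$ among heterogeneous clients, and the comparison with FedAvg needs a precisely stated baseline bound so the inequality can be checked term by term. The Polyak--\L{}ojasiewicz inequality of Lemma \ref{PL_inequality} is not used here, but will be the natural tool for converting this per-iteration bound into a convergence rate in the sequel.
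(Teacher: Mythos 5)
Your skeleton (descent lemma, Proposition \ref{thm_w_gap_bound} for the quadratic term, a parallel estimate for FedAvg) matches the paper's, but the step you yourself flag as the ``main obstacle'' is a genuine gap, and the route you sketch for it would fail. Writing $\boldsymbol{w}(t+1)-\boldsymbol{w}(t) = -\sum_{i}\theta_{i}\eta_{i}(t)\nabla F_{i}(\boldsymbol{w}(t))$ and $\nabla F(\boldsymbol{w}(t)) = \sum_{j}\theta_{j}\nabla F_{j}(\boldsymbol{w}(t))$ makes the inner-product term $-\sum_{i,j}\theta_{i}\theta_{j}\eta_{i}(t)\langle \nabla F_{j}(\boldsymbol{w}(t)),\nabla F_{i}(\boldsymbol{w}(t))\rangle$, and under heterogeneity the cross terms $\langle \nabla F_{j},\nabla F_{i}\rangle$ can be negative, so $\eta_{i}(t)\ge 0$ and $\|\nabla F_{i}\|\le D$ do not even give non-positivity, much less the bound $\le -D$ (note also that $-D$ has the dimension of a gradient norm, so it cannot fall out of a purely sign-based argument). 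The paper sidesteps exactly this by \emph{not} keeping the exact per-client decomposition: it rewrites the aggregated update with a server-side effective learning rate $\eta(t)=\sum_{i}\theta_{i}\eta_{i}(t)$ in Eq. (\ref{global_w_update}), so that via Eq. (\ref{eq_label}) the inner product collapses to $-\|\boldsymbol{w}(t+1)-\boldsymbol{w}(t)\|/\eta(t)$, a manifestly non-positive quantity; it then combines the ceiling on each $\eta_{i}(t)$ from Eq. (\ref{eta_upper_bound}) with the norm bound of Eq. (\ref{w_gap_bound}) in the ratio inequality Eq. (\ref{eta_ineq}), where the numerator is exactly $D$ times the bound on $\eta(t)$, to obtain the $-D$ term. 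Without introducing this effective-rate identity (or some substitute assumption on gradient alignment), your inner-product estimate cannot be completed.

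The FedAvg comparison you propose is also not the one needed. The paper does not compare ``bracketed quantities'' or invoke the drift contraction of Proposition \ref{thm_w_global_local_bound}; instead it bounds FedAvg's inner-product term from \emph{below}, $-\|\boldsymbol{w}(t+1)-\boldsymbol{w}(t)\|/\eta \ge -\sum_{i}\theta_{i}\|\nabla F_{i}(\boldsymbol{w}(t))\| \ge -D$ by Assumption \ref{Bounded_Gradients} (Eq. (\ref{D_comparison})), while FedEnt's corresponding term is at most $-D$; with the same quadratic bracket on both sides this yields Eq. (\ref{w_avg_lower}). Your coefficient-by-coefficient plan is not well posed here because FedAvg's update involves neither $\eta_{i}(t)$ nor the mean-field terms, so there is no matching coefficient for Proposition \ref{pro_eta_t_0} or Proposition \ref{thm_w_global_local_bound} to act on. In short: adopt the effective learning rate $\eta(t)=\sum_{i}\theta_{i}\eta_{i}(t)$ and the two-part inequality Eq. (\ref{eta_ineq}) for the $-D$ term, and base the FedAvg comparison on the lower bound $-D$ for its linear term rather than on a contraction argument.
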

\begin{proof}
For the central server, the global model update rule at $t$-th training iteration can be rewritten as follows:
\begin{align}\label{global_w_update}
\boldsymbol{w}(t + 1) = \boldsymbol{w}(t) - \eta(t) \nabla F(\boldsymbol{w}(t)), 
\end{align}
where $\eta(t) \! = \! \sum_{i=1}^{N} \theta_{i} \eta_{i}(t)$ represents the learning rate of the central server at $t$-th global training iteration.

Based on Eq. (\ref{global_w_update}), we can easily acquire the expression of the gradient of the global loss function:
\begin{align}\label{eq_label}
\nabla F(\boldsymbol{w}(t)) &= \frac{\boldsymbol{w}(t) - \boldsymbol{w}(t + 1)}{\eta(t) }.
\end{align}

From the $L$-Lipschitz smoothness of $F(\boldsymbol{w}(t))$ in Assumption \ref{Lipschitz_Smoothness} and Taylor expansion, we have:
\begin{align}\label{global_gradient_bound}
& F(\boldsymbol{w}(t + 1)) - F(\boldsymbol{w}(t))  \nonumber \\
&\leq \left< \nabla F(\boldsymbol{w}(t)), \boldsymbol{w}(t + 1)\! - \! \boldsymbol{w}(t) \right> \!  + \! \frac{L}{2} {\| \boldsymbol{w}(t + 1) \! - \!\boldsymbol{w}(t) \|}^2 \nonumber \\
& \leq - \frac{\| \boldsymbol{w}(t + 1) - \boldsymbol{w}(t) \| }{\eta (t)} +  \frac{L}{2} {\| \boldsymbol{w}(t + 1) - \boldsymbol{w}(t) \|}^2 . 
\end{align}

Based on Eq. (\ref{eta_deacy}) and Eq. (\ref{global_gradient_bound}), it can be observed that, with the given mean-field terms, the convergence upper bound of the global loss function can be reduced by decreasing the learning rate over time. Moreover, according to Proposition \ref{pro_eta_t_0} and Eq. (\ref{w_gap_bound}), we have:
\begin{align} \label{eta_ineq}
\left\{\begin{array}{l}
\vspace{5pt}
-\frac{1}{\eta(t)}  \leq  -\frac{1}{ \sum_{i=1}^{N} \theta_{i} \frac{ 2\beta \theta_{i} D \| \boldsymbol{\phi_{1}}(t) \| }{(1 - \beta)\phi_{2}(t + 1)} }, \\
-\frac{1}{\eta(t)} \left(\sum_{i=1}^{N} \frac{2 \beta \theta_{i}^{2}  D^{2} \| \boldsymbol{\phi_{1}}(t) \|}{(1 - \beta)\phi_{2}(t+1)} \right) \leq - D. 
\end{array}\right.
\end{align}

Then, according to the expression of Eq. (\ref{w_gap_bound}) and Eq. (\ref{eta_ineq}), we can rewrite the Eq. (\ref{eq_label}) as follows:
\begin{align}
& F(\boldsymbol{w}(t \!+\! 1)) \!-\! F(\boldsymbol{w}(t)) \nonumber \\
& \leq \frac{L}{2} {\| \boldsymbol{w}(t \!+\! 1) \!-\! \boldsymbol{w}(t) \|}^{2} \!-\! \frac{1}{\eta(t)}\sum_{i=1}^{N} \frac{2 \beta \theta_{i}^{2} D^{2} \| \boldsymbol{\phi_{1}}(t) \|}{(1 \!-\! \beta)\phi_{2}(t \!+\! 1)} \nonumber \\
& \leq \frac{L}{2} {\| \boldsymbol{w}(t \!+\! 1) \!-\! \boldsymbol{w}(t) \|}^{2} \!-\! D \nonumber \\ 
& \leq \frac{L}{2} \left[\sum_{i=1}^{N} \frac{2 \beta \theta_{i}^{2} D^{2} \|\boldsymbol{\phi_{1}}(t)\|}{(1 \!-\! \beta)\phi_{2}(t \!+\! 1) }\right]^{2} \!-\! D.
\end{align}

Furthermore, for the classic FedAvg algorithm, the global model update rule is as follows:
\begin{align}
\textstyle \boldsymbol{w}(t + 1) = \boldsymbol{w}(t) - \eta \sum_{i=1}^{N} \theta_{i} \nabla F_{i} (\boldsymbol{w}(t)),
\end{align}
where $\eta > 0 $ is the static learning rate of the central server.  From the $L$-Lipschitz smoothness property of $F(\boldsymbol{w}(t))$ in Assumption \ref{Lipschitz_Smoothness} and Taylor expansion, we have:
\begin{align} \label{global_gradient_avg_bound}
& F(\boldsymbol{w}(t \!+\! 1)) \!-\! F(\boldsymbol{w}(t))  \nonumber \\
& \leq \left\langle \nabla F(\boldsymbol{w}(t)), \boldsymbol{w}(t \!+\! 1) \!-\! \boldsymbol{w}(t) \right\rangle  \!+\! \frac{L}{2} {\| \boldsymbol{w}(t \!+\! 1) \! - \! \boldsymbol{w}(t) \|}^2 \nonumber \\
& \leq - \frac{\| \boldsymbol{w}(t \!+\! 1) \!-\! \boldsymbol{w}(t) \| }{\eta} \!+\! \frac{L}{2} {\| \boldsymbol{w}(t \!+\! 1) \!-\! \boldsymbol{w}(t) \|}^2 \nonumber \\
& \leq \!\frac{L}{2} \!\! \left[\sum_{i=1}^{N} \! \frac{2 \beta \theta_{i}^{2}  D^{2} \|\boldsymbol{\phi_{1}}(t)\|}{(1 \!-\! \beta)\phi_{2}(t \!+\! 1) }\!\right]^{2} \!\!\!\!-\! \frac{\| \boldsymbol{w}(t \!+\! 1) \!-\! \boldsymbol{w}(t) \| }{\eta}. \!\! 
\end{align}

Based on Eq. (\ref{global_gradient_avg_bound}) and Assumption \ref{Bounded_Gradients}, we can obtain:
\begin{align} \label{D_comparison}
& - \frac{\| \boldsymbol{w}(t + 1) - \boldsymbol{w}(t) \| }{\eta} = -\frac{ \| \sum_{i=1}^{N} \theta_{i} \eta \nabla F_{i}(\boldsymbol{w}(t))  \| }{\eta} \nonumber \\
& \geq  \textstyle -\sum_{i=1}^{N} \theta_{i} \| \nabla F_{i}(\boldsymbol{w}(t)) \|  \geq  -\sum_{i=1}^{N} \theta_{i} D = -D. 
\end{align}

Thus, we have the following derivation:
\begin{align}\label{w_avg_lower}
\frac{L}{2} \left[\sum_{i=1}^{N}\!\! \right.& \left. \frac{2 \beta \theta_{i}^{2} D^{2}  \|\boldsymbol{\phi_{1}}(t)\|}{(1 \!-\! \beta)\phi_{2}(t \!+\! 1) }\right]^{2} \!\! - \! \frac{\| \boldsymbol{w}(t + 1) - \boldsymbol{w}(t) \| }{\eta} \nonumber \\ 
&\geq  \frac{L}{2} \left[\sum_{i=1}^{N} \frac{2 \beta \theta_{i}^{2} D^{2} \|\boldsymbol{\phi_{1}}(t)\|}{(1 \!-\! \beta)\phi_{2}(t \!+\! 1) }\right]^{2} \! - \! D. \!\!
\end{align}

From Eq. (\ref{w_avg_lower}), it can be easily observed that the convergence upper bound of our proposed algorithm FedEnt is smaller than that of the FedAvg algorithm.
\end{proof}

Based on the Theorem \ref{thm_F_global_bound}, it is shown that the convergence upper bound of the global loss function $F(\boldsymbol{w})$ between any two consecutive global training iterations in Eq. (\ref{lobal_gradient_bound_final}) is smaller than that of the FedAvg algorithm, which provides a trustworthy convergence upper bound for our method. 

\subsection{Bounding Convergence Rate}
According to the upper bound of the global loss function in Theorem \ref{thm_F_global_bound}, we conduct the convergence rate analysis of the FedEnt by measuring the difference between the global loss function $F(\boldsymbol{w}(t))$ at the $T$-th global training iteration and the optimal model $\boldsymbol{w^{*}}$ (similar to the prior research \cite{rakhlin2011making, stich2018local,spiridonoff2021communication}), which is summarized in following Theorem.

\begin{thm}\label{thm_convergence_rate}
Under the aforementioned Assumptions \ref{Bounded_Gradients}-\ref{Lipschitz_Smoothness} and Lemma~\ref{PL_inequality}, if $\kappa(t) \!=\! 1 - (1 + \frac{1}{1 - \beta})\frac{4 L \delta D^{2} \|\boldsymbol{\phi_{1}}\!(t)\|^{2}}{\phi^{2}_{2}(t+1)}$, the convergence rate of FedEnt after $T$ global training iterations satisfies,
\begin{align}\label{convergence_rate}
F(\boldsymbol{w}(T)) - F(\boldsymbol{w^{*}}) \leq & (F(\boldsymbol{w}(0)) - F(\boldsymbol{w^{*}})) \prod \limits_{t=0}^{T - 1} \kappa(t) \nonumber \\
& - D \left[1 + \sum_{t=1}^{T-1} \prod \limits_{t'=1}^{T-t} \kappa(t')\right],     
\end{align}
which can be faster than that of FedAvg as shown in Figs.~\ref{ex_mnist}-\ref{ex_cifar100}.
\end{thm}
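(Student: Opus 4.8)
The plan is to convert the one-step descent estimate of Theorem~\ref{thm_F_global_bound} into a linear recursion for the optimality gap $a_t := F(\boldsymbol{w}(t)) - F(\boldsymbol{w^{*}})$ and then unroll it. The two ingredients not yet exploited are Lemma~\ref{PL_inequality} (the Polyak--\L{}ojasiewicz inequality), which lets a gradient-norm term be traded for a multiple of $a_t$, and the explicit learning-rate bound of Proposition~\ref{pro_eta_t_0}, which converts $\eta(t)=\sum_{i}\theta_i\eta_i(t)$ into an expression in the mean-field quantities $\|\boldsymbol{\phi_1}(t)\|$ and $\phi_2(t+1)$.

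First I would revisit the $L$-smoothness / Taylor step from the proof of Theorem~\ref{thm_F_global_bound}, written with the global update~(\ref{global_w_update}) as
\begin{align*}
F(\boldsymbol{w}(t+1)) - F(\boldsymbol{w}(t)) \leq -\eta(t)\|\nabla F(\boldsymbol{w}(t))\|^{2} + \tfrac{L}{2}\,\eta^{2}(t)\,\|\nabla F(\boldsymbol{w}(t))\|^{2}.
\end{align*}
On the first term I apply Lemma~\ref{PL_inequality}, $\|\nabla F(\boldsymbol{w}(t))\|^{2}\geq 2\delta a_t$, to pull out a factor $a_t$; on the second I use $\|\nabla F(\boldsymbol{w}(t))\|\leq D$ (triangle inequality together with Assumption~\ref{Bounded_Gradients}) and the averaged form of Eq.~(\ref{eta_upper_bound}), namely $\eta(t)\leq \tfrac{2\beta D\|\boldsymbol{\phi_1}(t)\|}{(1-\beta)\phi_2(t+1)}$ (using $\sum_i\theta_i^{2}\leq\sum_i\theta_i=1$); I also carry along the residual $-D$ term already isolated in Theorem~\ref{thm_F_global_bound} and the decay relation~(\ref{eta_deacy}). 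Collecting the terms should yield a recursion of the shape
\begin{align*}
a_{t+1} \leq \kappa(t)\,a_t - D, \qquad \kappa(t)=1-\Bigl(1+\tfrac{1}{1-\beta}\Bigr)\tfrac{4L\delta D^{2}\|\boldsymbol{\phi_1}(t)\|^{2}}{\phi_2^{2}(t+1)},
\end{align*}
where the summand ``$1$'' originates in the linear (PL) contribution and ``$\tfrac{1}{1-\beta}$'' in the curvature contribution that carries the extra $1/(1-\beta)$ factor of the learning-rate bound.

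Given this recursion, the remainder is a routine induction on $t$: iterating $a_{t+1}\leq\kappa(t)a_t - D$ from $t=0$ to $T-1$ gives $a_T \leq a_0\prod_{t=0}^{T-1}\kappa(t) - D\bigl[1+\sum_{s=1}^{T-1}\prod_{\tau}\kappa(\tau)\bigr]$, and reindexing the inner products puts the negative part in the stated form $-D\bigl[1+\sum_{t=1}^{T-1}\prod_{t'=1}^{T-t}\kappa(t')\bigr]$, which is exactly Eq.~(\ref{convergence_rate}). The comparison with FedAvg is then inherited from Theorem~\ref{thm_F_global_bound}: since FedEnt's per-iteration decrement dominates FedAvg's there, the telescoped product is no larger, so FedEnt contracts at least as fast, with the quantitative gap being what Figs.~\ref{ex_mnist}--\ref{ex_cifar100} display.

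The step I expect to be the main obstacle is pinning down $\kappa(t)$ exactly: the factor $\bigl(1+\tfrac{1}{1-\beta}\bigr)\tfrac{4L\delta D^{2}\|\boldsymbol{\phi_1}(t)\|^{2}}{\phi_2^{2}(t+1)}$ is sensitive to whether the quadratic term is bounded through $\|\nabla F\|\leq D$ or through the smoothness bound $\|\nabla F\|^{2}\leq 2La_t$, to how $\sum_i\theta_i^{2}$ is traded against $1$, and to precisely how the decay step~(\ref{eta_deacy}) interacts with the bound on $\eta(t)$; a looser accounting still delivers geometric contraction but with a different prefactor. A secondary point that must be verified for the bound to be informative is $\kappa(t)\in(0,1)$, i.e. $\bigl(1+\tfrac{1}{1-\beta}\bigr)\tfrac{4L\delta D^{2}\|\boldsymbol{\phi_1}(t)\|^{2}}{\phi_2^{2}(t+1)}<1$, which constrains how small $\phi_2(t+1)$ may be relative to $D\|\boldsymbol{\phi_1}(t)\|$ and connects back to the statistical estimates of $D$ and $L$ around Table~\ref{table_estimate}.
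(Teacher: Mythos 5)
Your overall skeleton---convert the one-step bound of Theorem~\ref{thm_F_global_bound} into a linear recursion $a_{t+1}\le\kappa(t)\,a_t-D$ using Lemma~\ref{PL_inequality} and the learning-rate bound, unroll by induction, and inherit the FedAvg comparison---is the same as the paper's, and your unrolling step and FedAvg argument are fine. The genuine gap is in how the recursion is obtained: the steps you describe do not produce the stated $\kappa(t)$ or the additive $-D$. In the paper, the contraction factor does \emph{not} come from the linear term $\langle\nabla F(\boldsymbol{w}(t)),\boldsymbol{w}(t+1)-\boldsymbol{w}(t)\rangle$; it comes from the quadratic term $\Omega=\frac{L}{2}\bigl[\sum_i\frac{2\beta\theta_i^2 D\|\boldsymbol{\phi_1}(t)\|\,\|\nabla F_i(\boldsymbol{w}(t))\|}{(1-\beta)\phi_2(t+1)}\bigr]^2$ of Theorem~\ref{thm_F_global_bound}, which is rewritten so that a factor $\|\nabla F(\boldsymbol{w}(t))\|^2$ appears, the $\beta$-dependent prefactor is bounded through the chain $\frac{\beta^2}{(\beta-1)^3}\le\cdots\le 1+\frac{1}{1-\beta}$, and only then is the Polyak--\L{}ojasiewicz inequality invoked, yielding Eq.~(\ref{difference_with_optimal}); the $-D$ is produced \emph{separately} by the linear term via Eq.~(\ref{eta_ineq}). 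You do the opposite: you spend the single negative linear term on the PL contraction and bound the quadratic term by constants through $\|\nabla F\|\le D$ and Eq.~(\ref{eta_upper_bound}). Carried out literally, that gives $a_{t+1}\le\bigl(1-2\delta\eta(t)\bigr)a_t+\frac{L}{2}\eta^2(t)D^2$, i.e.\ a contraction factor $1-2\delta\eta(t)$ rather than $1-(1+\frac{1}{1-\beta})\frac{4L\delta D^2\|\boldsymbol{\phi_1}(t)\|^2}{\phi_2^2(t+1)}$, and a \emph{positive} additive term instead of $-D$; no reindexing of such a recursion yields Eq.~(\ref{convergence_rate}).

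Relatedly, your plan to ``carry along the residual $-D$ term already isolated in Theorem~\ref{thm_F_global_bound}'' double-counts: in that theorem the $-D$ \emph{is} the bound on the linear term $-\|\boldsymbol{w}(t+1)-\boldsymbol{w}(t)\|/\eta(t)$ obtained from Eq.~(\ref{eta_ineq}), so once you have converted that same term into $-2\delta\eta(t)a_t$ it is no longer available to supply a constant drift. Nor can a constant $-D$ be recovered from any gradient term in a standard descent-lemma accounting, since $\|\nabla F(\boldsymbol{w}(t))\|$ is not bounded below near the optimum; the $-D$ in the theorem exists only because of the paper's specific manipulation. Your closing concern about verifying $\kappa(t)\in(0,1)$ is sensible, but it is orthogonal to the missing step above (and is not addressed in the paper's proof either).
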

\begin{proof}
Based on Theorem \ref{thm_F_global_bound}, we further analyze the upper bound of the global loss function as:
\begin{align} 
F(\boldsymbol{w}(t \!+\! 1)) &\!-\! F(\boldsymbol{w}(t)) \!\leq\! \underbrace{\frac{L}{2}\!\!\left[\sum_{i=1}^{N} \!\! \frac{2 \beta \theta_{i}^{2} \! D \|\boldsymbol{\phi_{1}}\!(t)\| \| \nabla \!F_{i} (\boldsymbol{w}(t)) \|}{(1 - \beta)\phi_{2}(t + 1) }\!\right]^{2}}_{\Omega} \!\!\!- D  \nonumber \\
&\Rightarrow \Omega \leq \frac{2 L \beta^{2}  D^{2}  \|\boldsymbol{\phi_{1}}(t)\|^{2}  (-\|\nabla \!F_{i} (\boldsymbol{w}(t))\|^{2})}{(\beta-1)^{3} \phi^{2}_{2}(t + 1)}. \nonumber
\end{align}

Then, as the inequality 
\begin{align}
\frac{\beta^{2}}{(\beta-1)^{3}} \!\leq\! \frac{\beta}{(\beta-1)^{3}} \!\leq\! \frac{\beta}{\beta-1} \!\leq\! 1 \!+\! \frac{1}{\beta-1} \!\leq\! 1 \!+\! \frac{1}{1-\beta},
\end{align}
holds and according to the Polyak-\L{}ojasiewicz inequality in Lemma \ref{PL_inequality}, Eq. (\ref{lobal_gradient_bound_final}) can be rewritten as follows:
\begin{align} \label{difference_with_optimal}
& F(\boldsymbol{w}(t\!+\!1)) \!-\! F(\boldsymbol{w^{*}})  \nonumber \\
&  \!\leq\! \left[\! 1 \!-\! (1 \!+\! \frac{1}{1 \!-\! \beta})\frac{4 L \delta  D^{2} \|\boldsymbol{\phi_{1}}\!(t)\|^{2}}{\phi^{2}_{2}(t + 1)}\! \right] \!\!(F(\boldsymbol{w}(t)) \!-\! F(\boldsymbol{w^{*}})) \!-\! D.
\end{align}

We define $ \kappa(t) \!=\! 1 - (1 + \frac{1}{1 - \beta})\frac{4 L \delta D^{2} \|\boldsymbol{\phi_{1}}\!(t)\|^{2}}{\phi^{2}_{2}(t+1)}$. Through iteratively adding both sides of the inequality in Eq. (\ref{difference_with_optimal}) at iteration $t \!\in\! \{0,1,\ldots, T-1\}$, we have:
\begin{equation*}
\resizebox{0.97\hsize}{!}{$\begin{aligned}
F(\boldsymbol{w}(T)) \!-\! F(\boldsymbol{w^{*}}) \!\leq\! (F(\boldsymbol{w}(0)) \!-\! F(\boldsymbol{w^{*}})) \prod \limits_{t=0}^{T - 1} \kappa(t) \!
- \! D \left[1 + \sum_{t=1}^{T-1} \prod \limits_{t'=1}^{T-t} \kappa(t')\right]. 
\end{aligned}$}   
\end{equation*}

Similar to the convergence analysis of FedEnt and define $S(t) \!=\! \frac{\| \boldsymbol{w}(t + 1) - \boldsymbol{w}(t) \|}{\eta}$, the convergence rate of FedAvg is derived as follows:
\begin{align}
F(\boldsymbol{w}(T)) \!-\! F(\boldsymbol{w^{*}}) &\leq  (F(\boldsymbol{w}(0)) \!-\! F(\boldsymbol{w^{*}})) \prod \limits_{t=0}^{T - 1} \kappa(t)  \nonumber \\
& - \! \left[S(T \!-\! 1) + \sum_{t=0}^{T-2} S(t) \!\! \prod \limits_{t'=1}^{T-t-1} \!\!\! \kappa(t') \!\right].  
\end{align}

\begin{figure*}[t]
\setlength{\abovecaptionskip}{2pt} 
    \centering
    \subfloat[IID ($\alpha_{d} \! \rightarrow \! +\infty$)]{
        \label{Cifar100_IID}
        \includegraphics[width=0.33\textwidth, trim=55 16 95 45,clip]{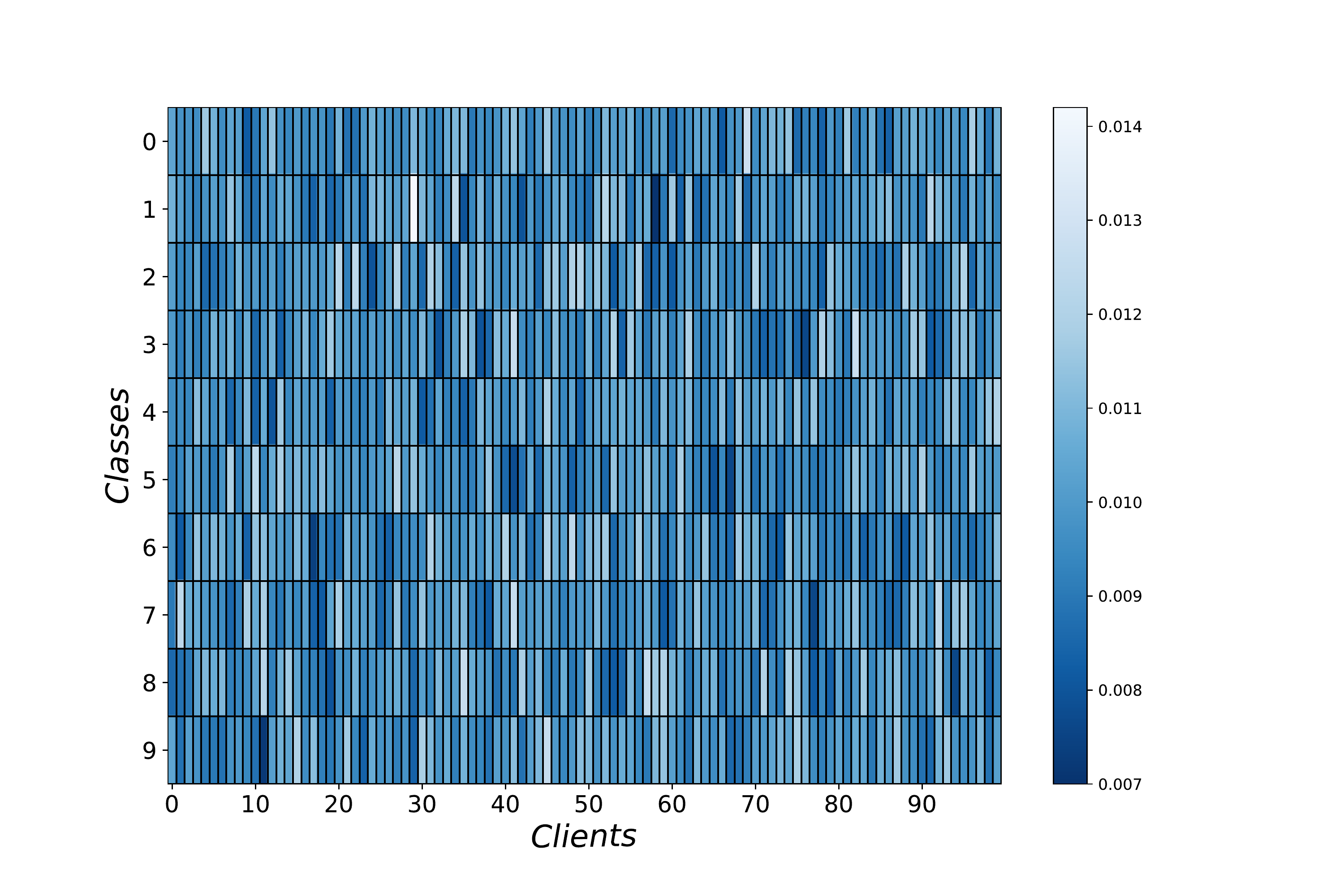}}
    \subfloat[$\alpha_{d} = 1.0$]{
        \label{Cifar100_Non_IID_1}
        \includegraphics[width=0.33\textwidth, trim=55 16 95 45,clip]{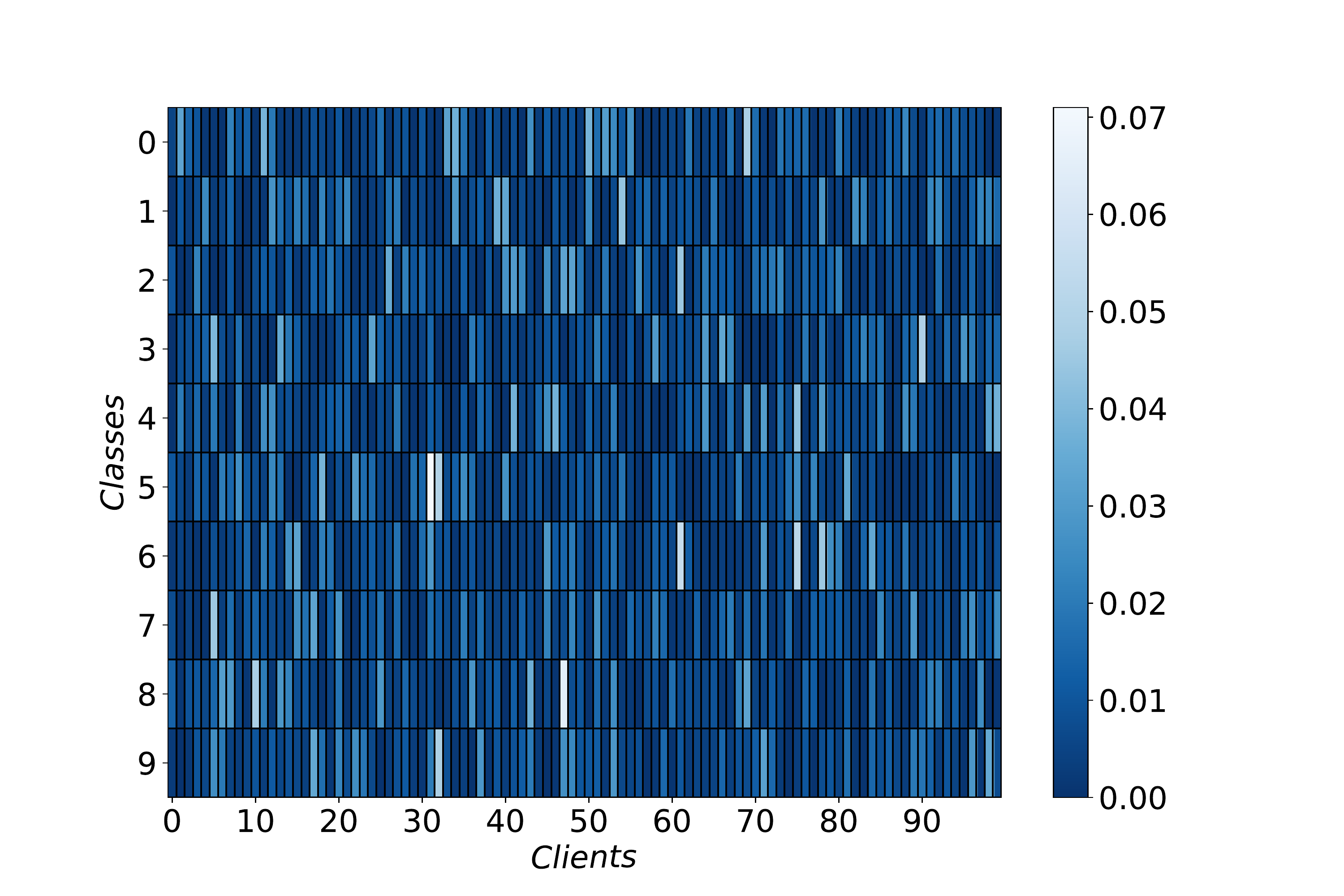}}
    \subfloat[$\alpha_{d} = 0.5$]{
        \label{Cifar100_Non_IID_0_5}
        \includegraphics[width=0.33\textwidth, trim=55 16 95 45,clip]{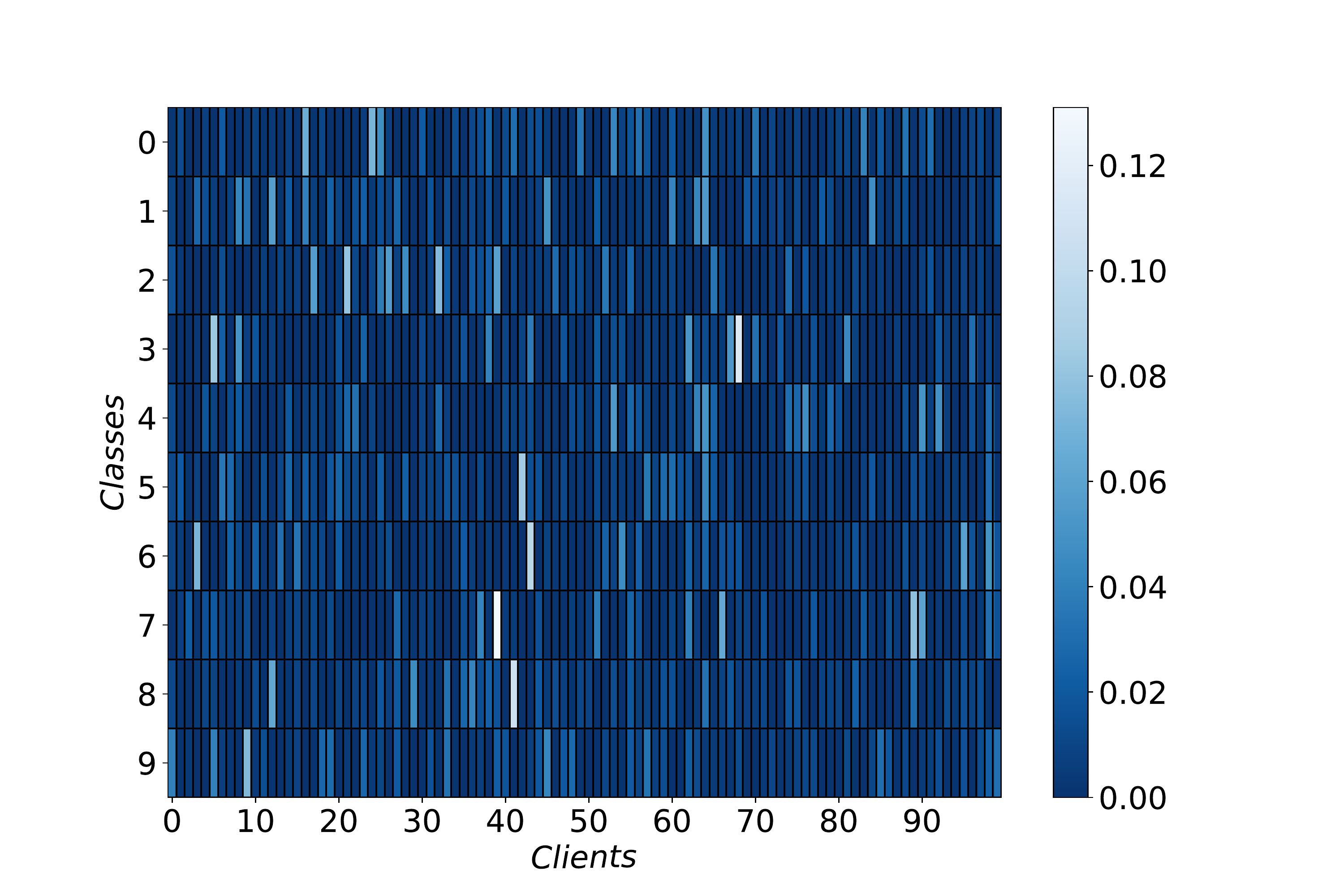}} 
    \caption{The different data distributions on the CIFAR10 dataset.}
    \label{Dirichlet_data_distribution}
\vspace{-8pt}
\end{figure*}

\begin{table*}[t]
\centering
\setlength{\abovecaptionskip}{2pt} 
\renewcommand\arraystretch{1.0}
\caption{Accuracy on Different Datasets with 20\% of 100 clients.}
\resizebox{1.0\textwidth}{!}{\begin{tabular}
{c|c|c@{\hspace{8.2pt}}|c@{\hspace{8.2pt}}|c@{\hspace{8.2pt}}|c@{\hspace{8.2pt}}|c@{\hspace{8.2pt}}} 
\toprule[1.2pt]
\textbf{Dataset} & \textbf{Distribution} & \parbox{2cm}{\centering\textbf{FedAvg}\cite{r1}} & \parbox{2cm}{\centering\textbf{FedAdam}\cite{reddi2020adaptive}} & \parbox{2cm}{\centering\textbf{FedProx}\cite{li2020federated}} & \parbox{2cm}{\centering\textbf{FedDyn}\cite{acar2021federated}} & \parbox{2cm}{\centering\textbf{FedEnt (ours)}} \\ \midrule[0.8pt]

\parbox{2cm}{\centering\textbf{MNIST}\\\centering\textbf{+MNIST-CNN}}
        & Pathological Non-IID & 85.53  & 94.89  & 86.72  & 88.95 &  \textbf{97.24}  \\ \midrule[0.8pt]

\multirow{2}{*}{\multirowcell{2}{\textbf{CIFAR10} \\ \textbf{+CIFAR10-CNN}}}
        
        & $\alpha_{d} = 1.0$ & 56.41  & 66.89  & 67.87  & 67.11  & \textbf{69.02}  \\ \cmidrule[0.5pt](l{1pt}r{0pt}){2-7}
        
        & $\alpha_{d} = 0.5$ & 55.52  & 63.21  & 65.91  & 66.18  & \textbf{67.13}   \\ \midrule[0.8pt]

\multirow{2}{*}{\multirowcell{2}{\textbf{EMNIST-L} \\ \textbf{+LeNet-5}}}
        
        & $\alpha_{d} = 1.0$ & 91.89  & 92.18  & 92.01  & 92.37  & \textbf{92.92}   \\ \cmidrule[0.5pt](l{1pt}r{0pt}){2-7}
        
        & $\alpha_{d} = 0.5$ & 90.80 & 91.10  & 90.90  & 90.93  & \textbf{92.15} \\ \midrule[0.8pt]

\multirow{2}{*}{\multirowcell{2}{\textbf{CIFAR100} \\ \textbf{+VGG-11}}}
        & $\alpha_{d} = 1.0$ & 44.43  & 49.75  & 44.18  & 44.45  & \textbf{52.25}   \\ \cmidrule[0.5pt](l{1pt}r{0pt}){2-7}
        & $\alpha_{d} = 0.5$ & 43.91  & 49.90  &  43.96 & 44.84  & \textbf{51.15}
        \\ 
\bottomrule[1.2pt]
\end{tabular}}
\label{tab_experient_results}
\vspace{-12pt}
\end{table*}

\begin{figure}[t]
\setlength{\abovecaptionskip}{1pt} 
\centerline{\includegraphics[width=0.5\textwidth]{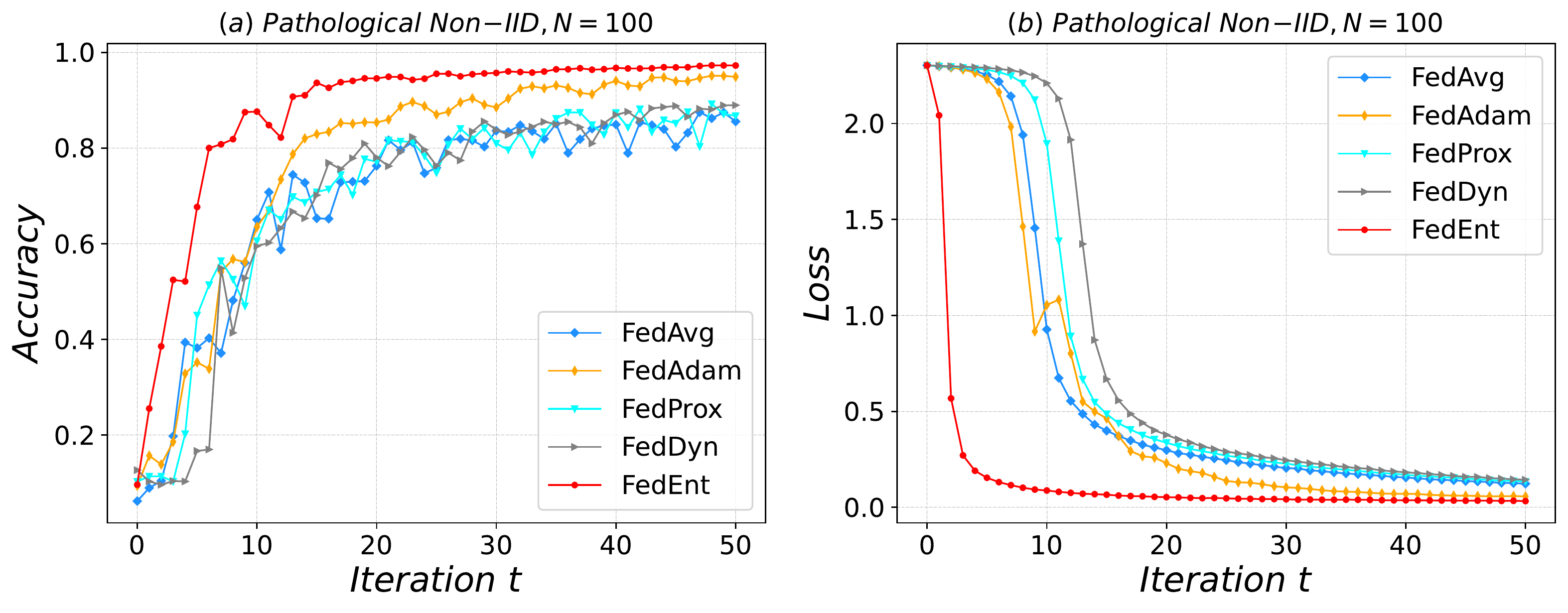}}
\caption{The performance of different FL methods on the MNIST dataset with 20\% of 100 clients.}
\label{ex_mnist}
\vspace{-15pt}
\end{figure}

According to the inequality in Eq. (\ref{D_comparison}), we have:
\begin{align} \label{convergence_rate_comparioson}
& \small (F(\boldsymbol{w}(0)) \!-\! F(\boldsymbol{w^{*}}))  \prod \limits_{t=0}^{T - 1} \kappa(t)\! 
- \! \left[S(T \!-\! 1) + \sum_{t=0}^{T-2} S(t) \!\! \prod \limits_{t'=1}^{T-t-1} \!\!\! \kappa(t') \!\right] \nonumber \\
& \small \!\geq\! (F(\boldsymbol{w}(0)) \!-\! F(\boldsymbol{w^{*}})) \!\prod \limits_{t=0}^{T - 1} \!\! \kappa(t)  \!-\! D \left[1 + \sum_{t=1}^{T-1} \! \prod \limits_{t'=1}^{T-t}\! \kappa(t')\right].\!\!
\end{align}

From Eq. (\ref{convergence_rate_comparioson}), we can observe that the convergence bound of our proposed FedEnt is smaller than that of FedAvg, which implies that our proposed adaptive algorithm FedEnt could achieve a faster convergence rate. Hence, the theorem holds.
\end{proof}

\begin{figure*}[t]
\setlength{\abovecaptionskip}{-1pt} 
\centerline{\includegraphics[width=0.84\textwidth]{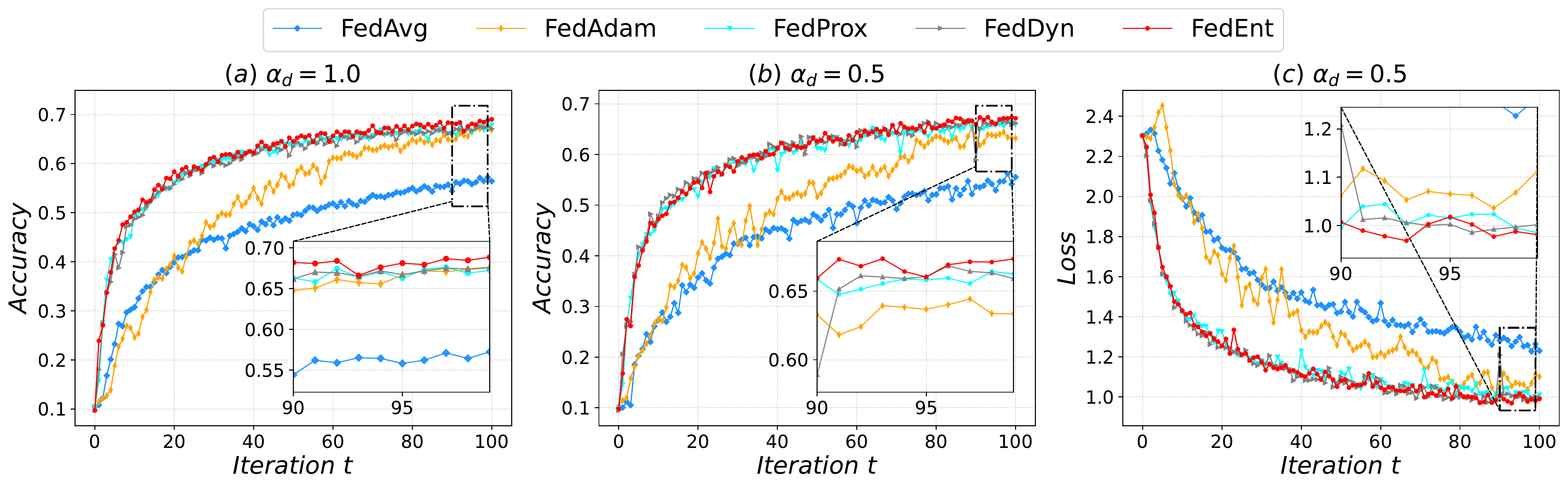}}
\caption{The performance of different FL methods on the CIFAR10 dataset with 20\% of 100 clients.}
\label{ex_cifar10}
\vspace{-14pt}
\end{figure*}

\begin{figure*}[t]
\setlength{\abovecaptionskip}{-1pt} 
\centerline{\includegraphics[width=0.84\textwidth]{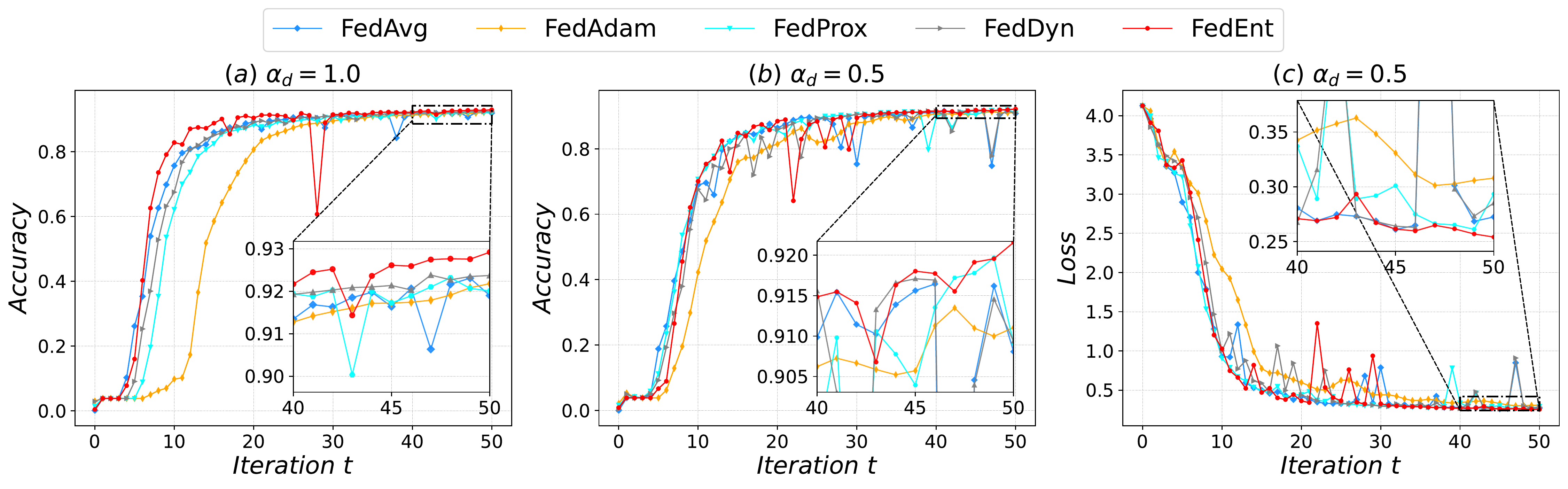}}
\caption{The performance of different FL methods on the EMNIST-L dataset with 20\% of 100 clients.}
\label{ex_emnist}
\vspace{-16pt}
\end{figure*}

Theorem \ref{thm_convergence_rate} provides an important insight into the convergence property of the FedEnt algorithm. It facilitates the selection of hyperparameters to enhance the convergence rate of FedEnt, suggesting that a higher value of the aggregation weight $\beta$ may result in faster convergence of the FL mode.

\section{Experiments}\label{sec_experiments}
In this section, we conduct extensive experiments on the MNIST, CIFAR10, EMNIST-L, and CIFAR100 datasets to verify the performance of our proposed adaptive learning FL algorithm FedEnt. We first introduce the experiment setup, including the dataset, local training model, and baseline FL methods. Then, we compare the performance of our adaptive learning rate method with other FL algorithms in terms of data distributions, client participation ratio, distance metric, parameter deviation, and hyperparameter efficiency. Experimental results show that our method outperforms the benchmarks.

\begin{table}[t]
\setlength{\abovecaptionskip}{-4pt} 
\caption{Hyperparameter of all Datasets}
\renewcommand\arraystretch{1.0}
\begin{center}
\resizebox{0.49\textwidth}{!}{\begin{tabular}{c|c|c|c|c}
\toprule[1pt]
\parbox{3.0cm}{\centering\textbf{Datasets+Local Model}}&\parbox{1cm}{\centering\textbf{Learning}\\\centering\textbf{Rate}}&\textbf{Batchsize}&\parbox{1cm}{\centering\textbf{Global}\\\centering\textbf{Iteration}} &\parbox{1cm}{\centering\textbf{Local}\\\centering\textbf{Epoch}}  \\ \cmidrule[0.5pt](l{1pt}r{0pt}){1-5}

\parbox{3.0cm}{\centering\textbf{MNIST+MNIST-CNN}} & 0.01 & 32  & 50 & 3 \\ \cmidrule[0.5pt](l{1pt}r{0pt}){1-5}

\parbox{3.0cm}{\centering\textbf{EMNIST-L+LeNet-5}} & 0.1 & 64  & 50 & 3 \\ \cmidrule[0.5pt](l{1pt}r{0pt}){1-5}

\parbox{3.3cm}{\centering\textbf{CIFAR10+CIFAR10-CNN}}& 0.1 & 20  & 100 & 5 \\ \cmidrule[0.5pt](l{1pt}r{0pt}){1-5}

\parbox{3cm}{\centering\textbf{CIFAR100+VGG-11}}& 0.01 & 32  & 110 & 3 \\ 
\bottomrule[1pt]
\end{tabular}}
\label{hyperparameter}
\end{center}
\vspace{-16pt}
\end{table}

\subsection{Experiment Setup}

\subsubsection{Dataset and Data Distribution}

\textbf{Dataset:} We evaluate the algorithm on four real-world datasets (MNIST \cite{lecun1998gradient}, CIFAR10 \cite{krizhevsky2009learning}, EMNIST-L \cite{cohen2017emnist}, and CIFAR100 \cite{krizhevsky2009learning}). MNIST dataset contains 60,000 training data images and 10,000 test set images, each of which is a $28 \times 28$ pixel grayscale image, representing a handwritten digit from 0 to 9. The CIFAR10 dataset consists of 60,000 $32 \times 32$ sized RGB images of 10 categories, 50,000 of which serve as the training set and 10,000 as the validation set. EMNIST-L dataset comprises a total of 103,600 grayscale images from 37 categories, each of which contains an equal amount of data, including 2400 images as the training set and 400 images as the test set. There are 100 classes in the CIFAR100 dataset, and each class has 600 RGB images with size $32 \times 32$, where for each category, 500 images are set for model training and 100 images are set for performance test. 

\noindent \textbf{Data Distribution:} For the MNIST dataset, we partition it into Non-IID with a Pathological distribution similar to \cite{r1}. For the rest of the datasets, we generate Non-IID data based on the Dirichlet distribution \cite{hsu2019measuring}. We assume that each participating client's training example is drawn independently with class labels following a categorical distribution over $M$ classes parameterized by a vector $\boldsymbol{q} (q_{i} \! > \! 0, i \! \in \! \{1, 2, \ldots, M\}$ and $| \boldsymbol{q} |_{1} \! = \! 1)$. Then, we generate $\boldsymbol{q} \! \sim \! Dir(\alpha_{d} \boldsymbol{q})$ from a Dirichlet distribution, where $\boldsymbol{q}$ characterizes a prior class distribution over $M$ classes, and $\alpha_{d} \! > \! 0$ is a concentration parameter controlling the identicalness among clients. With $\alpha_{d} \! \rightarrow \! +\infty$, all clients have identical distributions to the prior; while with $\alpha_{d} \! \rightarrow \! 0$, on the other extreme, each client holds examples from only one class chosen at random \cite{hsu2019measuring}. The data distributions on the CIFAR10 dataset under IID distribution ($\alpha_{d} \!\! \rightarrow \!\! +\infty$) and Non-IID distribution with different Dirichlet parameters $\alpha_{d}$ are displayed in Fig. \ref{Dirichlet_data_distribution}. With the increment of $\alpha_{d}$, the color of the heat map that depicts the training label distribution becomes deeper as illustrated in 
Fig. \ref{Dirichlet_data_distribution}(b)-\ref{Dirichlet_data_distribution}(c). In the meantime, the range of label ratio is wider, which indicates a higher level of data heterogeneity.

\begin{figure*}[t]
\setlength{\abovecaptionskip}{0pt} 
\centerline{\includegraphics[width=0.85\textwidth]{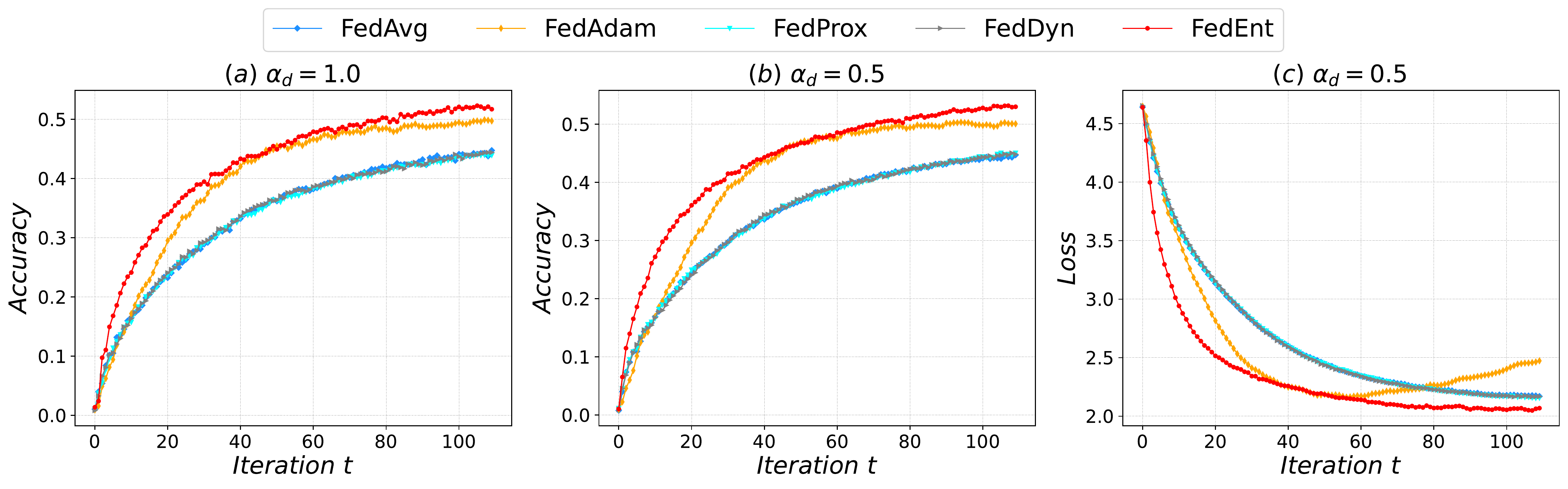}}
\caption{The performance of different FL methods on the CIFAR100 dataset with 20\% of 100 clients.}
\label{ex_cifar100}
\vspace{-14pt}
\end{figure*}

\begin{figure*}[t]
\setlength{\abovecaptionskip}{1pt} 
\centerline{\includegraphics[width=0.85\textwidth]{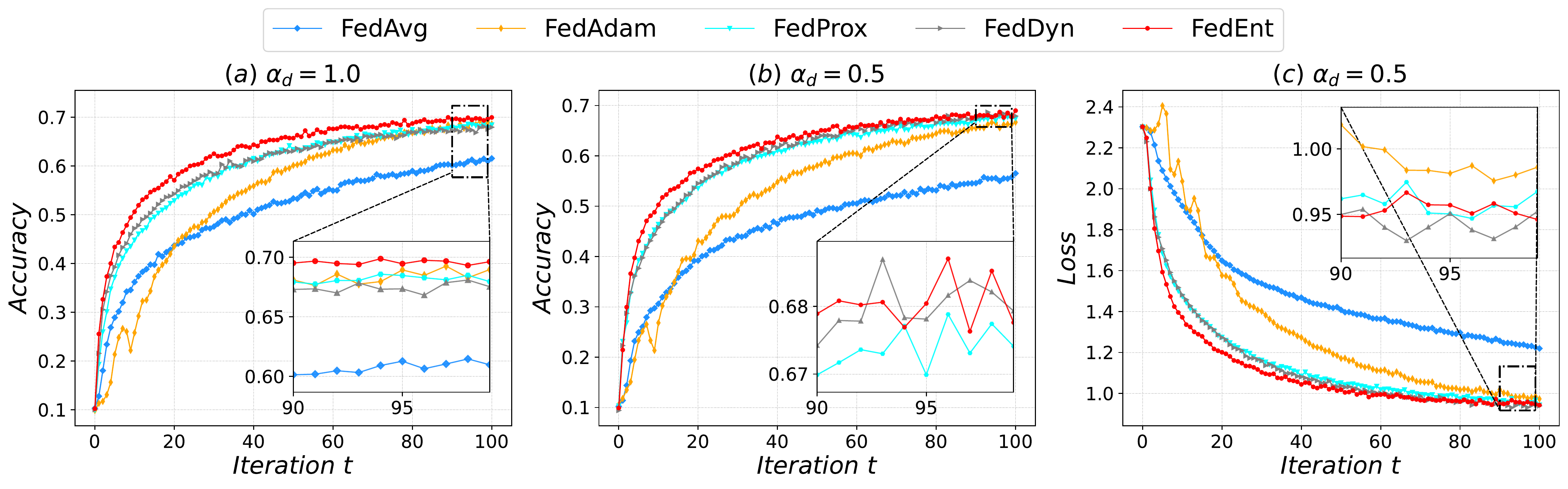}}
\caption{The performance of different FL methods on the CIFAR10 dataset with 100\% of 100 clients. }
\label{cifar10_100*100}
\vspace{-15pt}
\end{figure*}

\subsubsection{Model}
We utilize four different local training models in our experiments, each of which is specific to the corresponding dataset. For the MNIST dataset, we use an MNIST-CNN model, which contains two $5 \! \times \! 5$ convolution layers, each convolution layer has a layer followed by $2 \! \times \! 2$ max pooling, two dropout layers. Each dropout layer is followed by a fully connected layer of $7 \! \times \! 7 \! \times \! 64$ and 512 units, respectively, and a ReLU output layer of 10 units. For the CIFAR10 dataset, a CNN model is implemented on CIFAR10, which includes three $3 \! \times \! 3$ convolution layers, each convolution layer has a layer followed by $2 \! \times \! 2$ max pooling, two dropout layers. Each dropout layer is followed by a fully connected layer of $4 \! \times \! 4 \! \times \! 64$ and 512 units, respectively, and a ReLU output layer of 10 units. For the EMNIST-L and CIFAR100 datasets, we adopt the LetNet-5 \cite{lecun1998gradient} and VGG-11 \cite{simonyan2014very} respectively.

\subsubsection{Baseline and Parameters Setting}
We compare our proposed adaptive learning rate method FedEnt with other state-of-the-art FL algorithms, including FedAvg \cite{r1}, FedAdam \cite{reddi2020adaptive}, FedProx \cite{li2020federated}, and FedDyn \cite{acar2021federated}. To more comprehensively evaluate the performance of our FedEnt in terms of measuring the difference between the global model and local models, we have also introduced the following two FL algorithms as comparisons:
\begin{itemize}[itemsep=0pt, leftmargin=*, align=right]
    \item FedCos: Refer to \cite{wu2021fast}, this algorithm uses cosine similarity to measure the parameter difference between the global and local models. Specifically, when updating the global model parameters, FedCos assigns update weights based on the cosine similarity between the local and global models. In FedCos, the cosine similarity is calculated using the following formula:
    \begin{align}\label{cos_objective}
    \small \omega_{i}(t) = \arccos \left(\frac{\langle \boldsymbol{w_{i}}(t), \boldsymbol{w}(t) \rangle}{\Vert \boldsymbol{w_{i}}(t) \Vert \Vert \boldsymbol{w}(t) \Vert}\right), \theta_{i}(t) = \cos (\omega_{i}(t)),
    \end{align}
    where $\omega_{i}(t)$ represents the cosine similarity between client $i \!\in\! \{1, 2, \ldots, N\}$ and the global model at iteration $t \!\in\! \{1, 2, \ldots, T\}$. By computing the cosine value $\theta_{i}(t) = \cos(\omega_{i}(t))$, FedCos converts the similarity into weights for updating the global model parameters. This similarity metric helps capture directional differences between model parameters.
    \item FedNorm \cite{tu2022adaptive} utilizes the Euclidean norm to measure the parameter difference between the global and local models. In this case, the adaptive objective optimization function for client $i \in \{1, 2, \ldots, N\}$ is transformed into:
    \begin{align}\label{norm_objective}
        \small U_{i}(T)  \!= & \small \min_{\substack{\eta_{i}(t)}}  \sum_{t=0}^{T} \!
        \Bigg( \! (1 \!-\! \beta)\eta_{i}^{2}(t) \!+\! \beta \sum_{j=1}^{N} {\Vert \boldsymbol{w_{i}}(t) \!-\! \boldsymbol{w}(t)\Vert}^{2} \! \Bigg) , \\ 
     \text{s.t.} \quad\quad & \small \boldsymbol{w_i}(t+1) = \boldsymbol{w}(t) - \eta_i(t)\nabla F_i(\boldsymbol{w}(t)), \nonumber
    \end{align}
    where $\eta_{i}(t)$ represents the learning rate of client $i$ at iteration $t$, and $\beta$ is a hyperparameter that balances the impact of model differences and learning rates. FedNorm aims to optimize the training effectiveness by minimizing the Euclidean distance between the global and local models.
\end{itemize}

\begin{figure*}[t]
\setlength{\abovecaptionskip}{-1pt} 
\centerline{\includegraphics[width=0.84\textwidth]{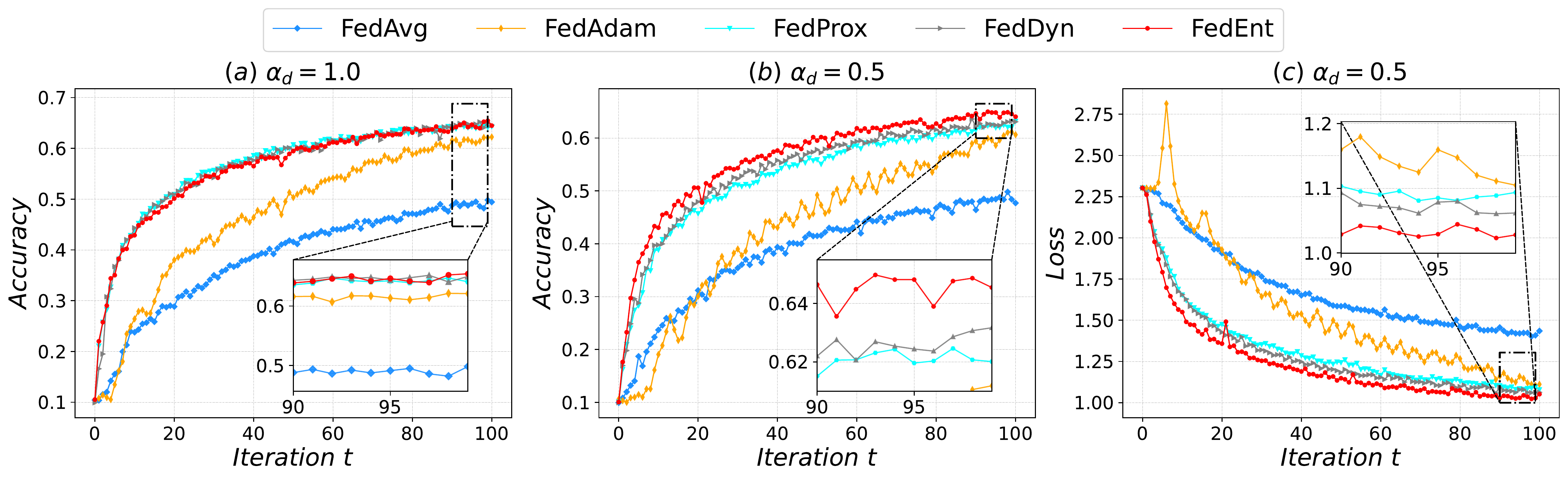}}
\caption{The performance of different FL methods on the CIFAR10 dataset with 20\% of 200 clients. }
\label{cifar10_20*200}
\vspace{-15pt}
\end{figure*}

Refer to \cite{reddi2020adaptive}, we adopt similar hyperparameter settings with $\beta_{1} \! = \! 0.9$, $\beta_{2} \! = \! 0.99$ and $\tau \! = \! 10^{-3}$ for FedAdam. Regarding FedProx and FedDyn, their key hyperparameters are set to $\mu \! = \! 0.01$ and $\alpha \! = \! 0.001$, respectively. For FedNorm and our method, we set weight $\beta \! = \! 0.99$ and weight decay $\gamma \! = \! 0.99$. We consider $N \! = \! 100$ clients during the training process and default client sample rate as 20\%. The rest of the hyperparameter settings are summarized in Table \ref{hyperparameter}.

\begin{figure*}[t]
\setlength{\abovecaptionskip}{-1pt} 
\centerline{\includegraphics[width=0.84\textwidth]{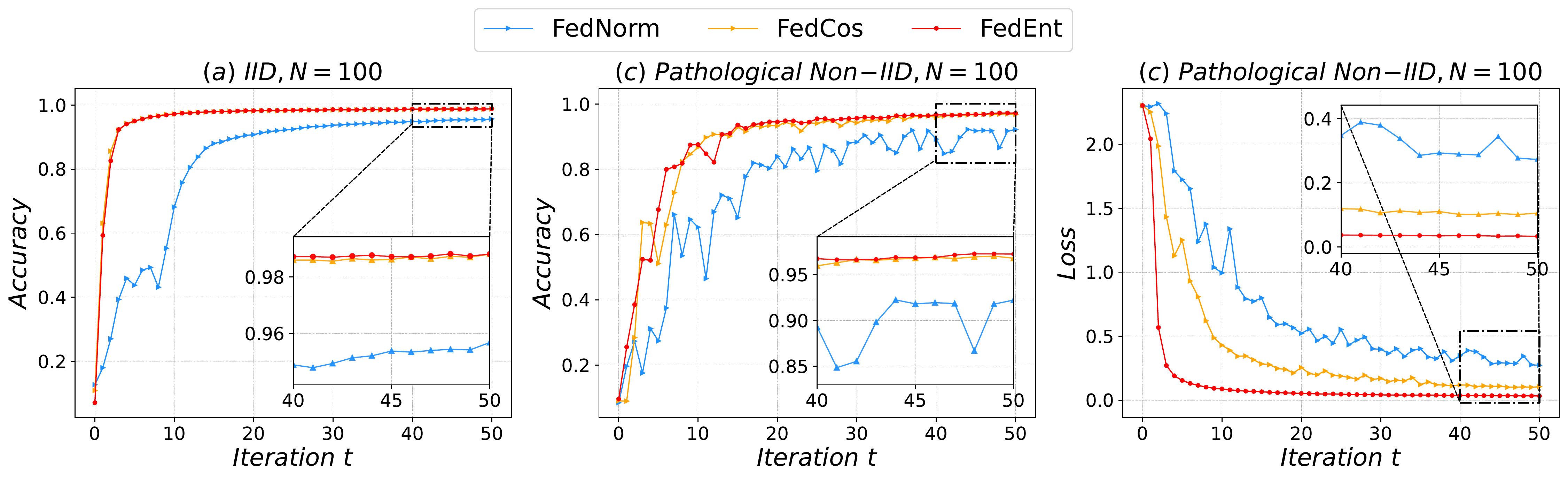}}
\caption{The performance of different distance metric FL methods on the MNIST dataset with 20\% of 100 clients.}
\label{ex_mnist_similarity}
\vspace{-16pt}
\end{figure*}

\begin{figure*}[t]
\setlength{\abovecaptionskip}{-1pt} 
\centerline{\includegraphics[width=0.84\textwidth]{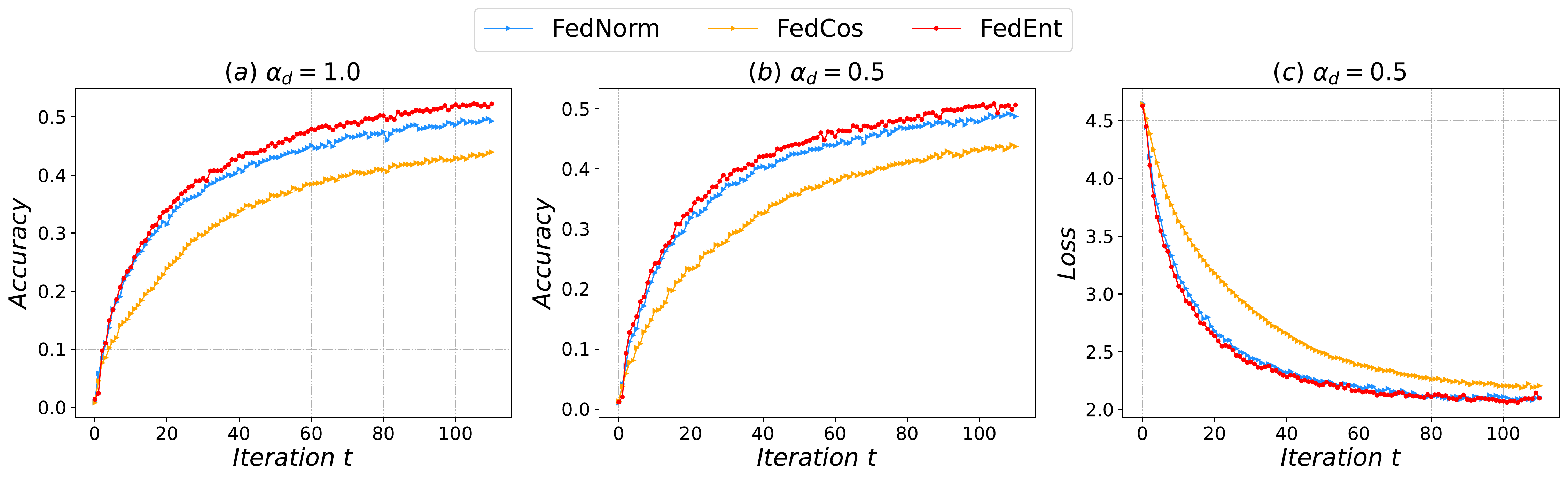}}
\caption{The performance of different distance metric FL methods on the CIFAR10 dataset with 20\% of 100 clients. }
\label{ex_cifar10_similarity}
\vspace{-16pt}
\end{figure*}

\subsection{Numerical Results}
We demonstrate the superiority of FedEnt from five different aspects, including data heterogeneous analysis on model performance under various data distributions, participation ratio analysis on the global model accuracy, distance metric analysis that evaluates the differences between global and local models, deviation analysis on the local model performance, and effectiveness of hyperparameters in model training.

\subsubsection{Data Heterogeneous Analysis}
We evaluate the effectiveness of the proposed FedEnt algorithm by comparing the accuracy of different FL algorithms on different datasets. The experimental results on Non-IID data are summarized in Table \ref{tab_experient_results}, and our proposed FedEnt algorithm achieves the best performance in all dataset cases. Specifically, FedEnt surpasses FedAvg with an 11\% increase in accuracy, while FedAdam shows a 4\% accuracy boost, FedProx exhibits a 9\% increase in accuracy, and FedDyn demonstrates a 7\% accuracy improvement. Additionally, we evaluate the convergence rate of FedEnt by comparing the accuracy and training loss curve of the learned global model under different data heterogeneity as displayed in Figs. \ref{ex_mnist}-\ref{ex_cifar100}. Taking the results on the MNIST dataset as an example, we can observe an obvious convergence rate gap between FedEnt and other baseline algorithms under the Non-IID distribution as shown in Fig. \ref{ex_mnist}(a). Meanwhile, in Fig. \ref{ex_mnist}(b), by assigning an adaptive learning rate in each global iteration, FedEnt effectively decreases the training loss during model training. The performance on the EMNIST-L dataset with different concentration parameters $\alpha_{d}$ in Fig. \ref{ex_emnist} is similar to that on MNIST due to their similar data features. Furthermore, despite the significantly higher task complexity of the CIFAR100 dataset, the enhancement with diverse data heterogeneity is still notable and effectively avoids over-fitting during model training as illustrated in Fig. \ref{ex_cifar100}(c), which is consistent with the analytical results in Theorem \ref{thm_F_global_bound}. FedEnt converges rapidly during the initial training stage thanks to the adaptive learning rate scheme and prominently improves the model performance.

\subsubsection{Participation Ratio Analysis}
We investigate how the number of clients affects the model performance of the learned FL global model by adjusting the participation ratio and the total number of participants. Firstly, we conduct simulations with 20\% of 100 mobile clients as shown in Figs. \ref{ex_mnist}-\ref{ex_cifar100}. The experimental results show that our proposed adaptive algorithm FedEnt outperforms the baselines with a higher accuracy and a faster convergence rate. Then, we modify the participation ratio of the clients to 100\% of 100 clients on the CIFAR10 dataset as displayed in Fig. \ref{cifar10_100*100}. It is observed that FedEnt still achieves the better  model performance in comparison to other state-of-the-art FL algorithms. In addition, with the increment of the participants, FedEnt has more prominent advantages over other baselines on model accuracy and the convergence rate. In Fig. \ref{cifar10_20*200}, we further increase the number of the total mobile clients to 200, and our proposed method still performs the best, which indicates the robustness and effectiveness of FedEnt under both small-scale and large-scale heterogeneous participants.

\subsubsection{Distance metric analysis between global and local models}

To compare the model performances of various distance metrics that evaluate the differences between global and local models (such as cosine similarity, norm forms, and entropy approach) in FL, the simulation experiments are conducted on the MNIST and CIFAR10 datasets. Firstly, the experimental results presented in Fig. \ref{ex_mnist_similarity} on the simple MNIST dataset reveal that the FedEnt algorithm still exhibits superior performance under both IID and Pathological Non-IID data distributions. This indicates its effectiveness in bridging the gap between the global and local models. In comparison, FedCos in Eq. (\ref{cos_objective}) which utilizes the cosine similarity between global and local models to adjust the FL model performs slightly inferior to FedEnt, demonstrating a degree of robustness. However, FedNorm in Eq. (\ref{norm_objective}) which utilizes the norm forms to measure the difference between the global and local models exhibits poorer performance across both data distributions, particularly under Pathological Non-IID.

The superiority of our proposed entropy-based method is more pronounced with the CIFAR10 dataset as shown in Fig. \ref{ex_cifar10_similarity}. It is evident that as the value of $\alpha_d$ increases (indicating a higher degree of Non-IID), the performance of all metrics declines,  which signifies the negative impact of the Non-IID nature of data on model performance. Nevertheless, regardless of the varying degrees of Non-IID, FedEnt consistently maintains a higher accuracy, demonstrating its resilience across different Non-IID scenarios. In conclusion,  FedEnt based on the entropy approach possesses a significant advantage compared with other distance metrics,  as it demonstrates consistently higher performance across various datasets and varying levels of Non-IID degrees. 

\begin{figure*}[t]
\setlength{\abovecaptionskip}{2pt} 
    \centering
    \begin{minipage}{170pt}
    \centerline{\includegraphics[width=1.0\textwidth, trim=8 5 5 7,clip]{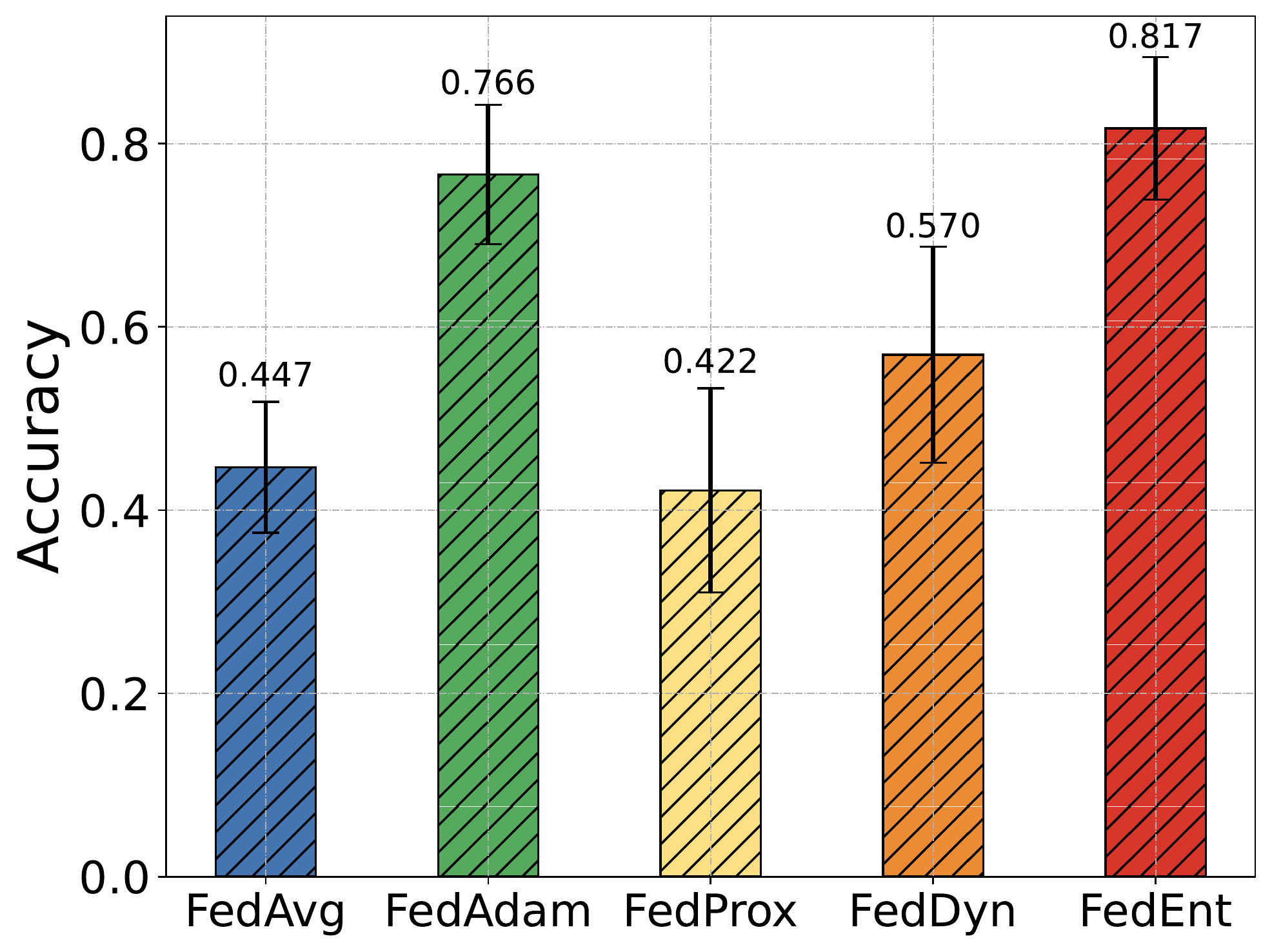}}
        \caption{The accuracy of local model parameters on the MNIST test dataset.}
        \label{client_dev}
    \end{minipage}
    \hspace{2pt}
    \begin{minipage}{320pt}
        \subfloat[Accuracy Curve]{
            \label{MNIST_beta_acc}
            \includegraphics[width=0.5\textwidth, trim=40 10 80 65,clip]{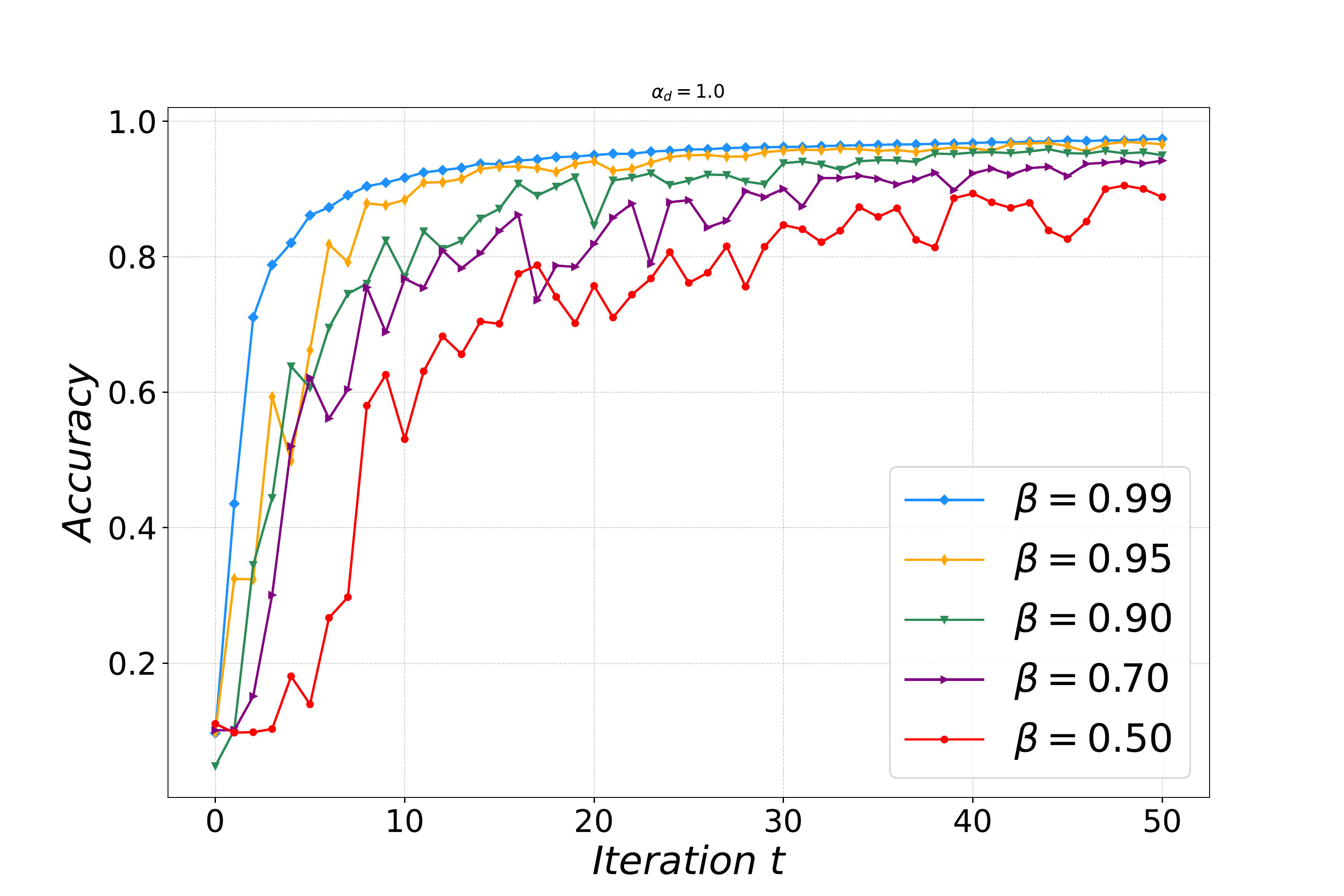}}
        \subfloat[Loss Curve]{
            \label{MNIST_beta_loss}
            \includegraphics[width=0.5\textwidth, trim=40 10 80 65,clip]{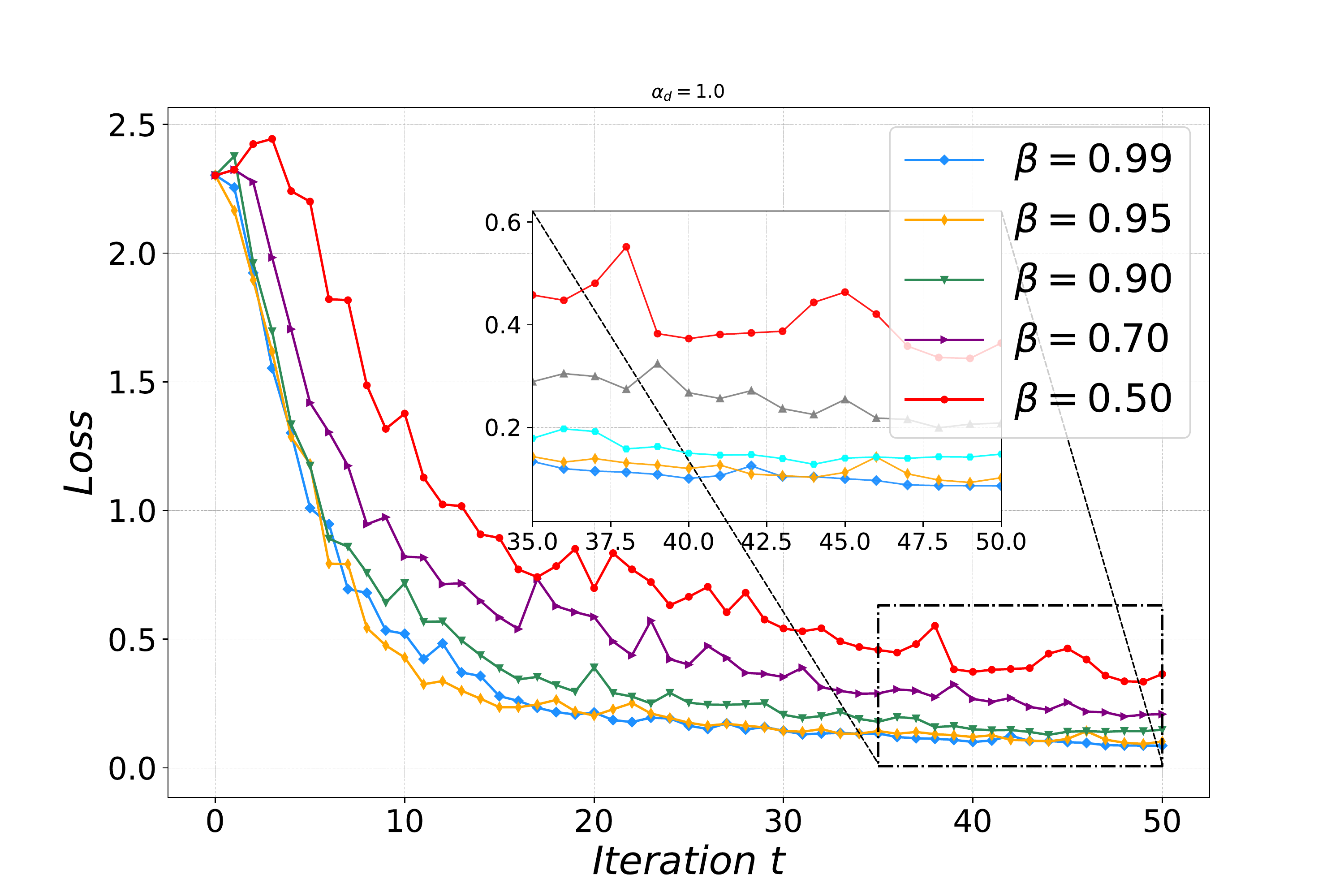}} 
         \caption{The impact of the weight aggregation parameter $\beta$ on the performance of FedEnt in the MNIST dataset.}
        \label{beta}
    \end{minipage}
\vspace{-18pt}
\end{figure*}

\subsubsection{Deviation Analysis}
We conduct deviation analysis to demonstrate the effectiveness of our proposed FedEnt in eliminating client drifting compared with other state-of-the-art FL algorithms, by analyzing the accuracy of the local parameter on the global test dataset during model training. Under the Pathological Non-IID data distribution, the data distribution on local clients varies considerably compared to that of the global test dataset on the central server. In Fig. \ref{client_dev}, we display the performance of different FL algorithms on the MNIST dataset with 20\% of participating clients. Our proposed adaptive FL algorithm FedEnt achieves the highest average accuracy and lowest accuracy error interval, which validates the effectiveness of the entropy term in mitigating the negative influence of client drifting and improving model performance in the federated network, as shown in Proposition \ref{thm_w_global_local_bound}. 

\begin{figure}[t]
\setlength{\abovecaptionskip}{0.5pt}
\begin{minipage}{245pt}
\subfloat[Accuracy Curve]{
    \label{MNIST_gamma_acc}
    \includegraphics[width=0.5\textwidth,  trim=40 10 80 65,clip]{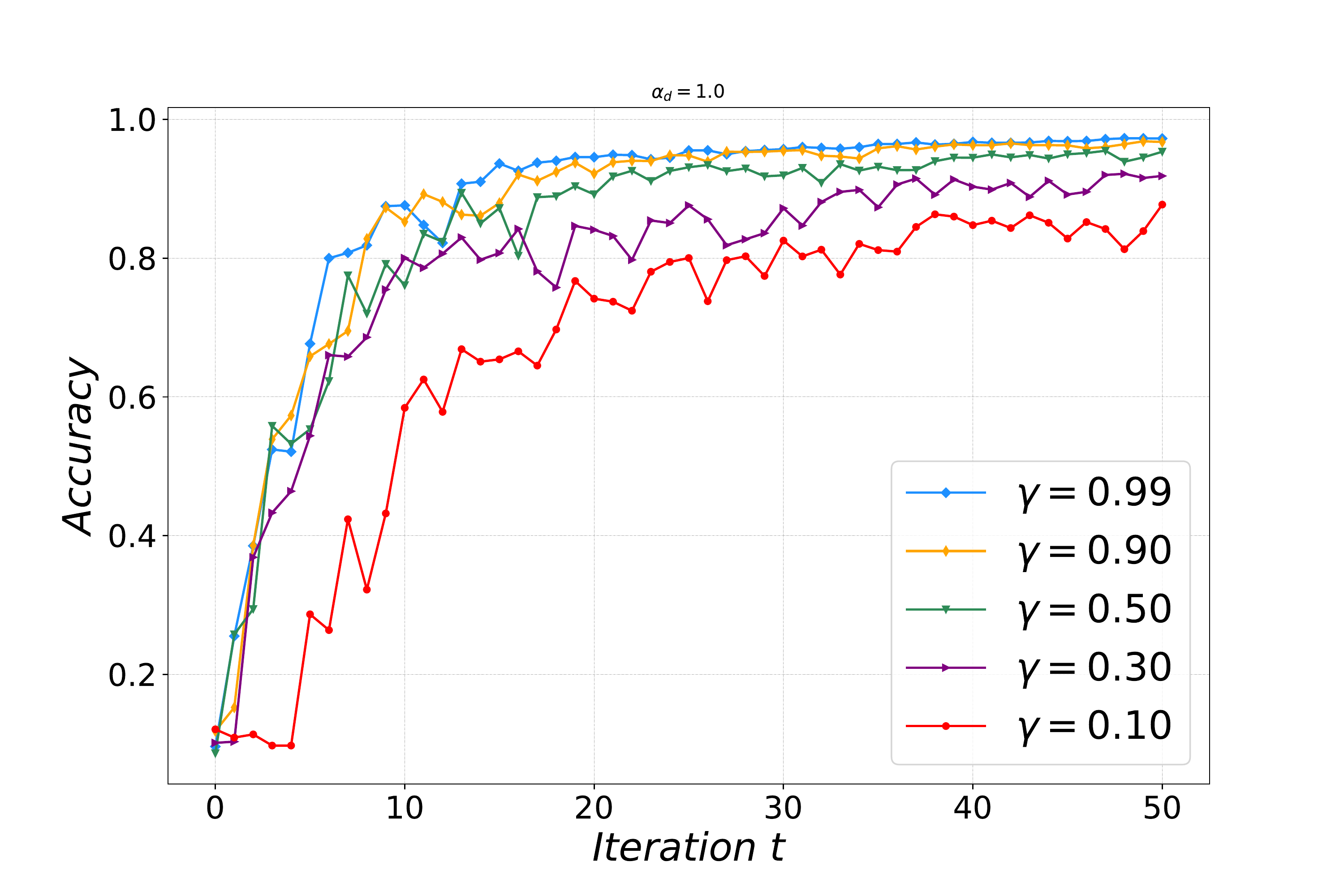}}
\subfloat[Loss Curve]{
    \label{MNIST_gamma_loss}
    \includegraphics[width=0.5\textwidth,  trim=40 10 80 65,clip]{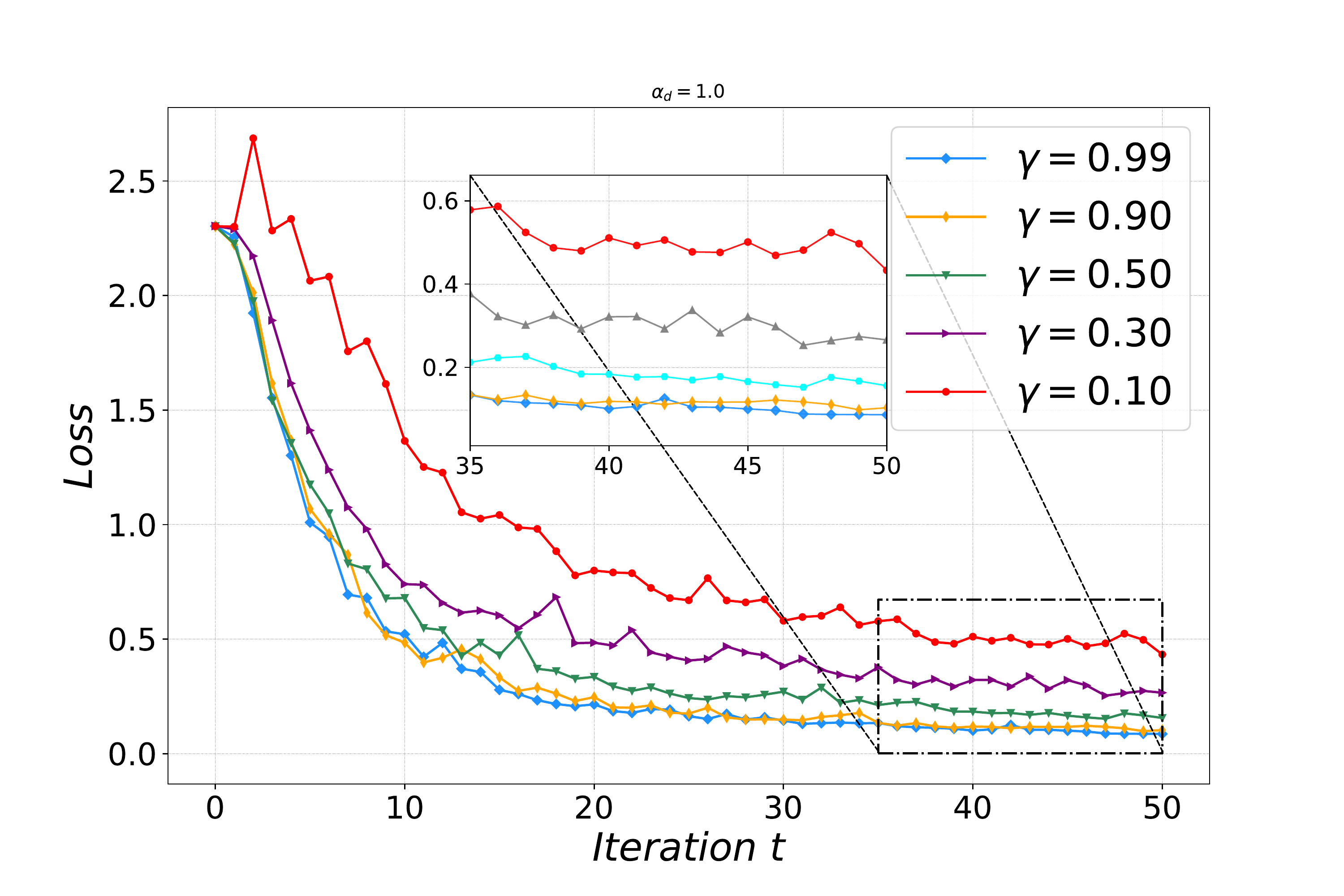}} 
\caption{The impact of the decay weight parameter $\gamma$ on the performance of FedEnt in the MNIST dataset.}
\label{ex_gamma}
\end{minipage}
\vspace{-18pt}
\end{figure}

\subsubsection{Effectiveness of hyperparameters}
To facilitate the practical application and promotion of the FedEnt, we have conducted a study on the impact of various hyperparameters on its model accuracy and convergence performance. First, we assess the impact of the aggregation weight $\beta$ delineated in Eq. (\ref{objective}) on the performance of the global model within the context of the MNIST dataset, featuring a Non-IID distribution. As depicted in Fig. \ref{beta}, an increase in $\beta$'s value directly enhances the global model's accuracy, a hastened convergence pace, and a loss reduction. This empirical evidence aligns with the theoretical exposition presented in Section \ref{sec_conv_an}, asserting that a heightened $\beta$ value signifies a more pronounced emphasis on minimizing the discrepancy among the local parameters of all clients, thereby augmenting model efficacy.  Subsequently, we delved into the impact of the learning rate decay parameter $\gamma$ in Eq. (\ref{eta_deacy}) on the global model performance. As shown in Fig. \ref{ex_gamma}(a), with the increase of $\gamma$, the global model accuracy of the FedEnt improves, and the convergence speed also accelerated notably. Simultaneously, we also noticed that the model loss gradually decreased as $\gamma$ increased, as depicted in Fig. \ref{ex_gamma}(b). These experimental results indicate that the value of $\gamma$ should be set larger (e.g., $\gamma=0.99$) to avoid the parameter fluctuations and achieve a faster convergence rate.

\section{Conclusion}\label{sec_conclusion}
In this paper, we discuss adaptive federated optimization with heterogeneous clients with Non-IID data distributions. Firstly, we utilize entropy theory to assess the degree of system disorder, based on which the adaptive learning rate for each client is designed to achieve a fast model convergence rate. Note that there is no communication among heterogeneous clients during local training epochs, we introduce the mean-field terms to estimate the components associated with other clients' local information over time. Through rigorous theoretical analysis, we demonstrate the existence and determination of the mean-field estimators and derive the closed-form of the decentralized adaptive learning rate for each client. In addition, the convergence property of our proposed adaptive FL algorithm FedEnt is analyzed. Finally, the extensive experimental results on different real-world datasets demonstrate the effectiveness of FedEnt, by achieving a faster convergence rate under the premise of keeping high accuracy compared with other state-of-the-art FL algorithms.

\bibliographystyle{IEEEtran}
\bibliography{ref}

\end{document}